\DeclarePairedDelimiter{\ceil}{\lceil}{\rceil}
\DeclarePairedDelimiter{\floor}{\lfloor}{\rfloor}
\title{Properties of nowhere dense graph classes related to independent set problem}
\author{Grzegorz Fabiański}
\thanks{This work is supported by the NCN grant 2016/21/D/ST6/01485}
\newtheorem{theorem}{Theorem}
\newtheorem*{theorem*}{Theorem}
\newtheorem{lemma}{Lemma}
\newtheorem{proposition}{Proposition}
\newtheorem{corollary}{Corollary}
\newtheorem{remark}{Remark}
\newtheorem*{claim}{Claim}
\theoremstyle{definition}
\newtheorem{definition}{Definition}
\newcommand{\Rea}{\mathbb{R}}
\newcommand{\Nat}{\mathbb{N}}
\newcommand{\Cl}{\mathcal{C}}
\newcommand{\Al}{\mathcal{A}}
\newcommand{\Pl}{\mathcal{P}}
\newcommand{\so}{\rightarrow}
\newcommand{\nc}{\mbox{nc}}
\DeclareMathOperator{\dist}{dist}
\DeclareMathOperator{\profile}{profile}
\DeclareMathOperator{\set}{set}
\begin{document}

\begin{abstract}
  
  A set is called $r$-independent, if every two vertices of it are in distance greater then $r$. In the $r$-independent set problem with parameter $k$, we ask whether in a given graph $G$ there exists an $r$-independent set of size $k$. In this work we present an algorithm for this problem, which applied to a graph from any fixed nowhere dense class, works in time bounded by $f(k, r)|G|$, for some function $f$. We also present alternative algorithm, with running time bounded by $g(k, r)|G|$, working on slightly more general classes of graphs.
\end{abstract}
\maketitle

\tableofcontents

\section{Introduction}
\paragraph{Independent set.}  An $r$-independent set is a subset of vertices of a  given graph such that any two different vertices in the set have mutual distance greater than $r$. We focus on the following version of this problem: given a graph $G$ and integers $r, k$ we are asked to decide whether the given graph has an $r$-independent set of size $k$.

  The unparametrized version of $r$-independent set is \textit{NP}-hard, even in very restricted settings --- for example for $r=1$ and with restriction to planar graphs of maximum vertex degree $3$, as shown in \cite{DBLP:journals/tcs/GareyJS76}. Despite this, practically efficient algorithms are known. Examples include branching algorithms \cite{DBLP:journals/ieicet/TomitaSHW13} \cite{DBLP:conf/isaac/XiaoN13}, and algorithms utilizing reductions \cite{DBLP:journals/tcs/AkibaI16}. Cases of planar graphs and graphs of bounded degree are studied in detail, because branching factors can be lifted from restricted to general settings \cite{DBLP:journals/algorithmica/BourgeoisEPR12}.

   Another line of research is focused on the parametrized version of the independent set problem --- we measure the running time not only by size of input, but also in terms of $k$.  Independent set admits a trivial algorithm with running time $O(n^{f(k)})$ and it is believed that no quantitative improvement is possible. In particular, the hypothesis (called $\textit{W}[1]\not=$\textit{FPT}) that there is no algorithm working in time $f(k)n^c$ (where $c$ is a constant and $f$ is computable function) is widely accepted in parametrized complexity theory and \textit{FPT}-reduction to independent set problem serves as a basic tool for showing hardness. Parametrized problems solvable in time $f(k)n^c$, for some computable function $f$ and constant $c$ are called fix-parameter trackable algorithm (\textit{FPT}). In search for \textit{FPT}-algorithms, it is necessary to study this problem in more restricted settings, where both graph the class and parameter $k$ are restricted.
  
  The dual problem of $r$-dominating set is similarly hard and is well-studied on restricted graph classes. Our work inherits basic concepts from the corresponding study of $r$-dominating set on nowhere dense classes of graphs.
 
\paragraph{Nowhere dense graph classes.}

  Classes of nowhere dense graphs were introduced by Nešetřil and Ossona de Mendez in \cite{DBLP:journals/jsyml/NesetrilM10} and generalize many known families of sparse graphs, for example the following classes  are nowhere dense: graphs with bounded treewidth, planar graphs and classes defined by a forbidden (even topological) minor. Many problems are effectively solvable on general classes of nowhere dense graphs, allowing to replace complicated topological arguments, originating from Robertson–Seymour graph structure theorem, by more general and simpler combinatorial proofs.
\begin{definition}  
  We say that a graph $H$ is an $r$-shallow minor of a graph $G$, if there is a family $(B_h)_{h \in H}$ of connected subgraphs of $G$ (called minor model) such that the radius of each $B_v$ is at most $r$ and for each edge $uv$ in the graph $H$ there are $u' \in B_u, v' \in B_v$ such that $u'v'$ is an edge in $G$. We write $H \preceq_r G$ to denote that $H$ is a $r$-shallow minor of $G$.
\end{definition}
\begin{definition}
  We say that a class of graphs $\Cl$ is nowhere dense if for every $r$ there is $t$ such that $K_t \not\preceq_r G$ for all $G \in \Cl$ (here $K_t$ is a clique with $t$ vertices).
\end{definition}
  Nowhere denseness can be characterized by many other conditions (some of them are described in \cite{DBLP:books/daglib/0030491}). In the scope of this work, a characterization in terms of the splitter game will be used. Details will be provided in Section~\ref{Splitter}.

\paragraph{Dominating set.} A subset of vertices $X$ of a graph $G$ is called $r$-dominating if every vertex in $G$ is at distance at most $r$ from some vertex of $X$. The dominating set problem is the problem of determining if a given graph $G$ has an $r$-dominating set of a given size $k$. The dominating set problem is a natural dual to the independent set problem (dominating set is a covering problem, while independent set is a packing problem).
  
  In absence of restrictions on the class of graphs, parametrized $1$-independent set can be fpt-reduced to $1$-dominating set. This shows that on general graphs $1$-dominating set is at least as hard as independent set --- in fact it is believed that it is strictly harder (this hypothesis is called $W[1] \not= W[2]$).

  The computational complexity of dominating set on restricted graph classes is well studied. \textit{FPT}-algorithms were developed in bigger and bigger classes of graphs, including classes of graphs defined by excluded minors. This line of research culminates in showing that nowhere dense graph classes are the largest subgraph-closed classes on which $r$-dominating set admits \textit{FPT} algorithm. More precisely, in \cite{DBLP:conf/stacs/DrangeDFKLPPRVS16} the following dichotomy was shown:
\begin{theorem}
  Let $\Cl$ be a class of graphs, closed under taking subgraphs. Then:
\begin{itemize}
\item if $\Cl$ is nowhere dense, then for every $r$, the $r$-dominating set problem is in FPT.
\item if $\Cl$ is not nowhere dense, then there is $r$ such that the $r$-dominating set problem is as hard as on general graphs (it is $W[2]$-hard).
\end{itemize}
\end{theorem}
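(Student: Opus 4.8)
This is a dichotomy, and I would prove its two implications separately; they are exhaustive because a subgraph-closed class is by definition either nowhere dense or \emph{somewhere dense}. For the nowhere-dense direction, the quickest route is logical: ``$G$ has an $r$-dominating set of size $k$'' is expressed by the first-order sentence $\exists x_1\cdots\exists x_k\,\forall y\,\bigvee_{i\le k}\delta_{\le r}(x_i,y)$, where $\delta_{\le r}(x,y)$ is the fixed formula asserting $\dist(x,y)\le r$; its size depends only on $k$ and $r$, so the theorem of Grohe, Kreutzer and Siebertz that first-order model checking is fixed-parameter tractable on every nowhere dense class yields an $f(k,r)\cdot|G|^{1+o(1)}$ algorithm. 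For an elementary proof, and simultaneously a polynomial kernel, I would instead go through \emph{domination cores}: a set $Z\subseteq V(G)$ is an $r$-domination core if every set that $r$-dominates $Z$ automatically $r$-dominates $G$. First I would show that a graph from a fixed nowhere dense class possessing an $r$-dominating set of size $k$ admits such a core of size polynomial in $k$, computable in FPT time — starting from $Z:=V(G)$ and iteratively discarding a ``redundant'' vertex, the existence of which in a large $Z$ comes from feeding the domination witnesses of the allegedly non-redundant vertices into \emph{uniform quasi-wideness} (a characterization of nowhere denseness) to produce a forbidden shallow minor. Given a small core $Z$, two vertices with the same trace $\{z\in Z:\dist(\cdot,z)\le r\}$ are interchangeable as dominators of $Z$, and nowhere dense classes have almost-linear neighbourhood complexity, so there are only polynomially many traces; one representative per trace together with $Z$ forms an equivalent instance of polynomial size, solved by brute force.

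For the somewhere-dense direction, the structural input (Nešetřil and Ossona de Mendez) is that a subgraph-closed somewhere dense class contains, for some constant $p$ depending only on the class, the exact $p$-subdivision of \emph{every} graph: somewhere denseness supplies bounded-depth shallow topological clique minors, a Ramsey argument makes their branch-path lengths uniform, and subgraph-closure propagates the resulting $p$-subdivisions of large cliques to all graphs. I would then reduce the $W[2]$-complete \emph{red--blue dominating set} problem (equivalently set cover) to $r$-dominating set with $r:=2p+1$: from a bipartite instance with ``set'' side $A$, ``element'' side $B$ and budget $k$, build an auxiliary graph $H'$ by adding a ``sink'' $s$ adjacent to all of $A$ plus a few forcing gadgets, and output $\bigl((H')^{(p)},\,k+O(1),\,r\bigr)$; as $(H')^{(p)}$ is a $p$-subdivision, it lies in $\Cl$ automatically. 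The key computation is that in $(H')^{(p)}$ the sink $s$ is within distance $2p+1$ of every set-vertex and every subdivision vertex but at distance $2p+2$ from every element-vertex, so $s$ pre-dominates precisely the vertices irrelevant to the covering question and cannot help dominate $B$. Hence a size-$k$ cover together with $s$ and the gadget vertices $r$-dominates $(H')^{(p)}$; conversely, a case analysis shows that an element-vertex can be $r$-dominated only by a vertex witnessing a single set — a set-vertex, an element-vertex, or a subdivision vertex on a set--element path — so any $r$-dominating set of the reduced instance folds back to a cover of size $k$.

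The main obstacle is the domination-core lemma: once a small core is in hand the rest of the nowhere-dense direction is bookkeeping, but producing a core of size $\mathrm{poly}(k)$ — rather than merely $f(k,r)$, and without black-boxing model checking — requires the full nowhere-dense toolkit (uniform quasi-wideness, equivalently the splitter game of Section~\ref{Splitter} and bounded generalized colouring numbers) together with the delicate argument turning irredundancy of a large candidate core into a forbidden shallow minor, and the trace count needs the non-trivial almost-linear neighbourhood-complexity bound. On the other side the conceptual content is entirely in the ``contains every $p$-subdivision'' input; the reduction itself is elementary, but tuning the sink and forcing gadgets so that the budget grows only by an additive constant while solutions remain pinned to the intended shape is genuinely fiddly.
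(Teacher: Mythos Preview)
The paper does not prove this theorem at all: it is quoted in the introduction as a known dichotomy from \cite{DBLP:conf/stacs/DrangeDFKLPPRVS16}, and the only related content the paper supplies is the statement of Theorem~\ref{SmallDomCore} (existence of small domination cores), again attributed to that reference. So there is no ``paper's own proof'' to compare your proposal against.

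That said, your outline is faithful to how the cited result is actually established. For the nowhere-dense half, the domination-core route you describe --- iteratively shrinking a candidate core via uniform quasi-wideness, then bucketing by $r$-profile on the core using the neighbourhood-complexity bound --- is precisely the strategy of \cite{DBLP:conf/stacs/DrangeDFKLPPRVS16}, and the paper explicitly names Theorem~\ref{SmallDomCore} as ``the main technical result'' of that work. Your alternative via Grohe--Kreutzer--Siebertz first-order model checking is also valid but gives only $|G|^{1+\varepsilon}$ rather than a polynomial kernel. For the somewhere-dense half, the input you invoke (a subgraph-closed somewhere dense class contains all $p$-subdivisions for some fixed $p$) is exactly the characterization the present paper later uses in Section~\ref{CowitnessToNoDense}, and your reduction with a sink vertex and $r=2p+1$ is the standard one; the distance arithmetic you give ($s$ reaches every set-vertex and every subdivision vertex within $2p+1$, but element-vertices sit at $2p+2$) is correct. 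The only place I would tighten is the converse direction of the reduction: ``a case analysis shows\ldots'' is doing real work, since an $r$-dominating set could a priori sit on subdivision vertices of the $s$--$A$ paths or inside the forcing gadgets, and one needs the gadgets designed so that such choices can always be pushed back to $A$ without increasing the budget.
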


  The following notion of domination core is crucial for this work:
\begin{definition}
  Fix a graph $G$ and two subsets of vertices $X$ and $A$. We say that \emph{$X$ $r$-dominates $A$} whenever every vertex from $A$ is at distance at most $r$ from some vertex of $X$.

  By this terminology, being an $r$-dominating set is equivalent to $r$-dominating $V(G)$.
\end{definition}
\begin{definition}
  Fix a graph $G$. We say that a set of vertices $A$ is a $(k,r)$-domination core if for every set $X$ with $|X| \leqslant k$ (candidate for $r$-dominating set), if $X$ $r$-dominates $A$, then $X$ $r$-dominates $G$.
\end{definition}
  The notion of domination core relaxes requirements posed on the set in question by changing the global condition ($r$-domination of whole graph) to a local one (domination of core). The main technical result of [13] can be summarized as:
\begin{theorem} \label{SmallDomCore}
  For every nowhere dense class $\Cl$ of graphs and integers $k,r \in \Nat$ there is constant $C$ such that: for every graph $G \in \Cl$ there exists a domination core (with parameters $k,r$) in $G$ of size at most $C$. Moreover, it can computed in polynomial time.
\end{theorem}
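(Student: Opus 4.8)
The plan is to build the core by \emph{monotone shrinking}: initialise $A := V(G)$, which is vacuously a $(k,r)$-domination core, and then repeatedly delete one vertex from $A$ while keeping it a $(k,r)$-domination core. Everything therefore reduces to a \emph{Shrinking Lemma}: there is a constant $C = C(\Cl,k,r)$ such that if $A$ is a $(k,r)$-domination core in $G\in\Cl$ and $|A| > C$, then there is $u \in A$ with $A\setminus\{u\}$ still a $(k,r)$-domination core. Iterating this at most $|V(G)|$ times produces a core of size $\le C$.

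For the polynomial-time claim, the key auxiliary fact (a standard consequence of nowhere denseness, namely bounded neighbourhood complexity) is that the number of distinct $r$-projections $N^r_G[v]\cap A$, over $v\in V(G)$, is bounded by a polynomial in $|A|$, and these projections are computable in polynomial time. Given that $A$ is a core, a set $X$ with $|X|\le k$ that $r$-dominates $A\setminus\{u\}$ but not $G$ exists if and only if $A\setminus\{u\}$ can be written as the union of at most $k$ of these $r$-projections, each avoiding $u$ (the forward direction uses that $A$ is a core to force $u$ to be the unique missed vertex of $A$). This is a set-cover instance with polynomially many sets and target size $k$, solvable by brute force over $k$-subsets; hence "is $u$ removable?" is a polynomial-time test, and the whole shrinking construction runs in polynomial time.

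The heart of the matter is the Shrinking Lemma, which I would prove by contradiction: assume $|A|$ exceeds a large threshold yet \emph{every} single-vertex deletion breaks the core property. For each $u\in A$ we obtain a witness $X_u$ with $|X_u|\le k$ that $r$-dominates $A\setminus\{u\}$ but not $G$; since $A$ is a core, $X_u$ cannot $r$-dominate $A$, so $u$ is the \emph{unique} vertex of $A$ with $\dist_G(u,X_u) > r$. Thus $A$ carries a family $(X_u)_{u\in A}$ of bounded-size near-dominators with a rigid incidence pattern, and the goal is to show this cannot occur in a nowhere dense graph. Three ingredients combine: (1) \emph{uniform quasi-wideness} (equivalent to nowhere denseness for subgraph-closed classes), applied to $A$ with locality $2r$, yielding a bounded set $S$ and a still-large $B\subseteq A\setminus S$ that is $2r$-independent in $G - S$; (2) the polynomial bound on the number of $r$-projections, used to prune the witnesses to a uniform sub-family over $B$; and (3) the sunflower lemma applied to $(X_b)_{b\in B}$, producing a large $B^*\subseteq B$, a common core $Y$, and pairwise disjoint petals $P_b = X_b\setminus Y$ with $B^*\cap N^r_G[Y]=\emptyset$, hence $\dist_G(b',P_b)\le r$ for all $b'\in B^*\setminus\{b\}$ while $\dist_G(b,P_b)>r$. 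One then links the petals, the short paths realising these distances, and the $2r$-separated vertices of $B^*$ into a shallow-minor (or topological-minor) model of a large clique, of depth bounded in terms of $r$ only, contradicting nowhere denseness once the number of branch sets exceeds the relevant clique bound.

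The main obstacle is precisely this last distillation. Two points are delicate. First, the quasi-wideness separator $S$ must be neutralised: vertices of $B$ within distance $r$ of $S$ can be $r$-dominated "through $S$", spoiling the private-dominator behaviour that $2r$-independence in $G-S$ would otherwise force; they must be quarantined, after observing that only boundedly many "honest" vertices of $B$ survive unless a single vertex of $S$ is itself $r$-close to almost all of $B$, which is then handled as its own sub-case. Second, the petals $P_b$ need not be connected in $G$, so assembling the branch sets into a legitimate shallow-minor model requires care (or a reformulation via topological minors / contracting short connectors). Finally the chain of quantities must close: the threshold $C$ is obtained by pushing the target clique size backwards through the sunflower bound, the projection bound, and the quasi-wideness function, which is why $C$ is a non-explicit constant depending on $\Cl$, $k$ and $r$.
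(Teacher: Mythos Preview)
This theorem is not proved in the paper. It is stated as a summary of the main technical result of reference~[13] (Drange et al., STACS 2016); the present paper merely quotes it to motivate, by analogy, its own notion of witness for the $r$-independent set problem. So there is no ``paper's own proof'' to compare against.

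That said, your sketch is recognisably in the spirit of the proof in~[13]: the iterative shrinking of $A$ starting from $V(G)$, and the appeal to uniform quasi-wideness to produce a large $2r$-scattered subset $B\subseteq A$ modulo a bounded separator $S$, are exactly the opening moves there. Where you diverge is in the endgame. The standard argument does not pass through a sunflower on the witness sets $X_u$ followed by assembling a shallow-minor model of a clique. Instead, one classifies the vertices of $B$ by their $r$-distance profile on $S$; two vertices $b,b'\in B$ with the same profile are then shown to be \emph{exchangeable}, meaning that any $X$ of size $\le k$ that $r$-dominates $A\setminus\{b\}$ must already $r$-dominate $b$ as well (via a case split on whether a shortest $X$--$b'$ path meets $S$, using that $b,b'$ are $2r$-apart in $G-S$ and indistinguishable from $S$). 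Since the number of profiles on $S$ is bounded, a large enough $B$ contains such a removable vertex. This avoids entirely the two delicate points you flag (neutralising $S$ inside a minor construction, and the lack of connectivity of the petals).

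Your sunflower-to-minor route is not obviously wrong, but you correctly identify it as the fragile step: the petals $P_b$ are at distance $>r$ from $b$ and $\le r$ from every other $b'$, which is a ``co-matching'' pattern rather than a clique pattern, and turning it into an $O(r)$-shallow clique minor with disjoint branch sets requires additional work that the sketch does not supply. If you want to complete a self-contained proof, I would replace that part by the profile-on-$S$ exchangeability argument above.
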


\paragraph{Our results.}
  We prove the following:
\begin{theorem}
  Fix a nowhere dense class $\Cl$ of graphs. Then there is an algorithm deciding the existence of an $r$-independent set of size $k$ in given graph $G \in \Cl$ in time $f(k,r)|G|$, for some computable function $f$.
 Moreover, if such an an independent set exists then the algorithm in question will generate such a set.
\end{theorem}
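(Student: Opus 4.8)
The plan is to reduce, in a bounded number of rounds, to instances of size bounded by a function of $k$, $r$ and the class $\Cl$, while reading the input essentially once. Write $\nu_r(G)$ for the maximum size of an $r$-independent set in $G$; it suffices to compute $\min(\nu_r(G),k)$. As a warm-up: if $G$ has at least $k$ non-empty connected components we answer YES (distinct components are infinitely far apart), and otherwise $G$ has fewer than $k$ components, so it is enough to solve each one separately and add up the fewer than $k$ resulting values; hence assume $G$ connected. Now build greedily a maximal $r$-independent set $I$ by repeatedly picking a vertex and deleting its closed $r$-ball. A maximal $r$-independent set is automatically $r$-dominating (an undominated vertex could be added to it), so $|I|$ is at least the $r$-domination number of $G$; in particular, if $|I|\ge k$ we are already done and may output $I$. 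If $|I|<k$, then $N_r[I]$ covers $G$ by fewer than $k$ balls of radius $r$, and moreover, invoking Theorem~\ref{SmallDomCore}, we may compute a $(k,r)$-domination core of size bounded in terms of $\Cl,k,r$, which we carry along to keep the number of ``relevant'' vertices under control. So it remains to handle the \emph{shallow} case, in which $G$ is covered by fewer than $k$ balls of radius $r$.

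For the shallow case I would run a recursive procedure following a Splitter strategy in the radius-$r'$ splitter game on $G$ for a suitable $r'$ depending on $r$ (Section~\ref{Splitter}); the recursion has depth $\ell=\ell(\Cl,r)$ and at each node maintains a bounded set $W$ of marked vertices --- the ball centres of the current localisation, the separators produced by the analysis below, and the domination core. Since $|W|$ is bounded, every vertex has one of at most $(r+2)^{|W|}$ distance profiles to $W$, so the current graph partitions into boundedly many profile classes. For each profile class $T$: if $|T|$ is below a threshold $N(\Cl,k,r)$ we keep $T$ as is; if $|T|\ge N(\Cl,k,r)$ we apply uniform quasi-wideness (a standard consequence of nowhere denseness) to extract a separator $S$, of size bounded in terms of $r$ only, and a large $B\subseteq T\setminus S$ that is $r$-independent in $G-S$. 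The role of the extra marking is that after refining $B$ by distance profiles to $S$ --- again boundedly many classes --- the largest sub-class $B'$ is still large and all its members lie at the same distance $\rho(s)$ from each $s\in S$.

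Let $s^\ast$ minimise $\rho$. If $2\rho(s^\ast)>r$, then $\dist_G(u,v)>r$ for all distinct $u,v\in B'$: a $u$--$v$ walk in $G$ either avoids $S$, hence has length $>r$ because $B'$ is $r$-independent in $G-S$, or it meets some $s\in S$ and then has length at least $\dist_G(u,s)+\dist_G(s,v)=2\rho(s)\ge 2\rho(s^\ast)>r$; so $B'$ is itself a large $r$-independent set of $G$ and we output it. If instead $2\rho(s^\ast)\le r$, then $B'\subseteq N_{\lfloor r/2\rfloor}[s^\ast]$, and any ball of radius $\lfloor r/2\rfloor$ in $G$ contains at most one vertex of any $r$-independent set of $G$, since two of its vertices are joined in $G$ by a walk of length $\le r$ through the centre; in this ``collapsed'' case the whole such ball contributes at most $1$ to $\nu_r(G)$, and we shrink it to a bounded number of representative vertices --- at most one per profile class with respect to the remaining marked vertices --- chosen so that $\min(\nu_r(G),k)$ is unchanged. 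After processing every profile class at every recursion node (boundedly many in total) we are left with an instance of size bounded by a function of $\Cl,k,r$, on which we compute $\min(\nu_r,k)$ by brute force; propagating this value and a witnessing set back up the recursion answers the original question and, when the answer is YES, yields an $r$-independent set of size $k$.

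Each level of the recursion reads the current graph a bounded number of times, and every primitive operation --- BFS balls, distance profiles, the wideness extraction, the shrinking of collapsed balls --- runs in linear time with the dependence on $k,r,\Cl$ hidden in the constant, and the depth $\ell(\Cl,r)$ is bounded; hence the total running time is $f(k,r)|G|$. I expect the genuinely delicate step to be precisely the interface between the global distance-$r$ independence condition and the inherently local tools: wideness and the splitter game only ever produce a set that is independent \emph{after removing a separator $S$}, and the entire second half of the argument --- the profile refinement turning $G-S$-independence into $G$-independence, together with the bookkeeping for collapsed balls (which representatives to retain so that shortest paths through the ball to the outside, and hence $\nu_r$, are preserved, and how separators from different recursion branches interact) --- is what must be made to work uniformly in $k$ and $r$ while keeping the running time linear in $|G|$.
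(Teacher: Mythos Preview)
Your plan is a kernelization-style argument via uniform quasi-wideness and a splitter recursion, and it is genuinely different from what the paper does. The paper never shrinks $G$ at all. It introduces a \emph{witness}: a bounded-size set $Q\subseteq V(G)$ such that every $k$-subset $X$ that fails to be $r$-independent has a path of length at most $r$ between two of its points passing through $Q$. The algorithm computes a $(k-1,r)$-\emph{cowitness} $Q$ of size bounded in terms of $\Cl,k,r$ (via a splitter-game recursion that only builds $Q$, not a smaller graph), then decides the problem by enumerating distance profiles on $Q$: if some $X$ is not captured by $Q$, a short constructive swapping argument (Theorem~\ref{CowitnessIsWitnessComp}) turns $X$ into an actual $r$-independent $k$-set; otherwise $Q$ is a witness and the answer is NO. No vertex is ever deleted, so the question of whether deletions preserve $\nu_r$ never arises.

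The gap in your proposal is precisely where you flag it, but it is a real gap rather than missing bookkeeping. In the ``collapsed'' case you propose to shrink the ball $N_{\lfloor r/2\rfloor}[s^\ast]$ to boundedly many representatives, one per profile class with respect to the marked set $W$. Deleting the remaining vertices of the ball can, however, \emph{increase} distances between vertices outside $W$: if $u,v\notin W$ had $\dist_G(u,v)\le r$ only through a path using a deleted vertex, then after deletion $\{u,v\}$ becomes $r$-independent and $\nu_r$ may go up, so you can produce a false YES. Keeping one representative per $W$-profile controls distances \emph{to $W$} but says nothing about shortest paths among the unboundedly many unmarked vertices; the invariant ``$\min(\nu_r,k)$ is unchanged'' therefore does not follow from what you have written. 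If instead you keep the graph intact and merely restrict the candidate set, you must argue that some optimal $r$-independent set lives inside the candidates, and for that you would need the discarded vertices of $B'$ to be exchangeable for retained ones against \emph{every} possible remainder $I\setminus\{v\}$, which again is not controlled by profiles to a bounded $W$. (The reference to the $(k,r)$-domination core in your first paragraph does not help here: that object certifies domination, not independence, and the paper's whole point is that the analogous ``core'' for independence---the witness---needs a new proof.) The paper's cowitness machinery is exactly the device that replaces this missing exchange argument.
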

  This result is a strengthening of a result from \cite{DBLP:conf/stoc/GroheKS14}, where an algorithm with running in time $f(k,r)|V(G)|^{1+\epsilon}$ (for every $\epsilon > 0$) is given.
  
  We use the following definition, analogous to the definition of a domination core.
\begin{definition}
  Fix a graph $G$. A \emph{$(k, r)$-witness} (that $G$ does not have an $r$-independent set of size $k$ in $A$) is a set $Q \subseteq G$ such that for every $X \subseteq V(G)$ with $ |X| = k$, there is a path, beginning and ending at a vertex in $X$, that passes through $Q$ and has length at most $r$.
  
  (Intuition: $X$ is not $r$-independent, and this fact is witnessed by $Q$.)
\end{definition}

  Our main technical result is the following:
\begin{restatable}[Witness existence]{theorem}{WitnessExistenceGen}
\label{WitnessExistence}
  Fix $k, r \in \Nat$ and a nowhere dense class of graphs $\Cl$. Then there is constant $C$ depending on $k, r$ and $\Cl$ such that for every graph $G \in \Cl$ which does not have an $r$-independent set of size $k$, there is a $(k,r)$-witness $Q \subseteq V(G)$ with $|Q|\leqslant C$.
\end{restatable}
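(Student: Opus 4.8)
\emph{Proof plan.} The plan is to imitate the argument behind the small‑domination‑core result (Theorem~\ref{SmallDomCore}): build a witness by repeatedly shrinking a trivial one, and invoke the sparsity of $\Cl$ — through the splitter game of Section~\ref{Splitter} — at a single counting step. The convenient reformulation is the following. For a $k$‑element set $X\subseteq V(G)$ write $U_X$ for the set of vertices lying on some path of length at most $r$ between two distinct vertices of $X$. Since $G$ has no $r$‑independent set of size $k$, every such $X$ contains a pair at distance $\le r$, so $U_X\ne\emptyset$ (indeed $U_X\cap X\ne\emptyset$). Unwinding the definition, $Q$ is a $(k,r)$‑witness exactly when $Q\cap U_X\ne\emptyset$ for every $k$‑set $X$; that is, a witness is precisely a transversal of the set system $\{U_X : X\in\binom{V(G)}{k}\}$, and we want one of bounded size.

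Since $Q_0:=V(G)$ is trivially such a transversal, it suffices to find a constant $C=C(k,r,\Cl)$ so that every transversal of size greater than $C$ contains a vertex whose removal leaves a transversal; iterating brings us down to size $\le C$. So suppose $Q$ is an inclusion‑minimal transversal with $|Q|>C$. For each $q\in Q$, minimality yields a $k$‑set $X_q$ with $Q\cap U_{X_q}=\{q\}$; since $q\in U_{X_q}$, fix $a_q,b_q\in X_q$ and a path $P_q$ of length $\le r$ from $a_q$ to $b_q$ through $q$. As $V(P_q)\subseteq U_{X_q}$, the path $P_q$ meets $Q$ only in $q$; in particular $\dist_G(a_q,q),\dist_G(b_q,q)\le r$, yet no path of length $\le r$ between two vertices of $X_q$ touches any vertex of $Q$ other than $q$.

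This is the only place nowhere density enters. As $|Q|>C$ is large, I would use the splitter game (equivalently, uniform quasi‑wideness) to delete a set $S$ of bounded size and pass to a still‑large subset $Q^\ast\subseteq Q\setminus S$ whose members are pairwise at distance $>3r$ in $G-S$, so that the radius‑$r$ balls around them in $G-S$ are pairwise disjoint and avoid $S$. For $q\in Q^\ast$ the $k$‑set $X_q$ can meet only a bounded amount of the structure relevant near $q$: the ball around $q$ in $G-S$, the $O(1)$ balls and vertices associated with $S$, and $O(k)$ further balls around other elements of $Q^\ast$. I would therefore attach to $X_q$ a bounded \emph{local type} recording, up to this picture, how $X_q$ and $P_q$ sit relative to $q$ and to $S$; as the number of local types is bounded, pigeonhole supplies many $q\in Q^\ast$ whose $X_q$ share a type. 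From such a family one derives a contradiction: either the short connections of $X_q$ can be perturbed so as to avoid $q$ altogether — exploiting that $q$ has been separated from the rest of $Q^\ast$, and handling $S$ separately — yielding a $k$‑set $X'_q$ with $Q\cap U_{X'_q}=\emptyset$ and contradicting that $Q$ is a transversal; or, when that fails, the ``near‑$q$'' part of one $X_q$ and a translated copy of the ``far'' part of another are spliced into an honest $r$‑independent set of size $k$ in $G$, against the hypothesis. Either way the minimality of $Q$ collapses.

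The routine parts are the reformulation and the two ``remove a redundant vertex'' reductions. I expect the real work — and the main obstacle — to lie in the last paragraph: the local type must be coarse enough to take only $O_{k,r,\Cl}(1)$ values (this is where the splitter‑game bound must be applied in just the right shape — essentially that the number of distance‑$\le r$ profiles onto a fixed set with bounded support is polynomially bounded in a nowhere dense class), yet fine enough that equal type really licenses the perturbation or splicing while keeping every path involved of length $\le r$. That last constraint is where the subtleties separating short \emph{walks} from short \emph{paths} — parity, internal disjointness, and the effect of the deleted set $S$ — must be absorbed (presumably the role of the $\Odd$/$\Even$ bookkeeping), and making coarseness and fineness coexist is the crux.
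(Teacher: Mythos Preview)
Your route is genuinely different from the paper's, and the step you yourself flag as ``the crux'' is a real gap, not a routine detail.

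The paper does not run an irredundancy argument on witnesses at all. Instead it introduces an auxiliary object, the \emph{$(k,r)$-cowitness} for a pair $(A,G)$: a set $Q$ such that for every $k$-set $X\subseteq A$ with $A\subseteq N_r(X)$ and every $a\in A$, some short $a$--$X$ path passes through $Q$. Two independent facts are then combined. First, every $(k{-}1,r)$-cowitness for $(V(G),G)$ is already a $(k,r)$-witness whenever $G$ has no $r$-independent $k$-set (Theorem~\ref{CowitnessIsWitness}); this is a short exchange argument that takes an uncaptured $X$ minimising the number of ``close'' vertices and either invokes the cowitness property directly (if $X\setminus\{w\}$ $r$-dominates $V(G)$) or swaps a close $w$ for a far $w'$ to decrease the count. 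Second, nowhere dense classes admit bounded-size cowitnesses (Theorem~\ref{Cowitness}); this is proved by induction on the depth $d$ of the $(d,3r)$-splitter game, carrying an explicit separator $S$, partitioning $A$ by profiles on $S$, applying the greedy lemma inside each profile class to get centres $Z$, and recursing on the balls $N^{G-S}_{3r}(z)$. The whole point of the cowitness detour is that its defining hypothesis $A\subseteq N_r(X)$ makes the inductive localisation clean: once you pass to a ball around some $z\in Z$, the restricted $X$ still $r$-dominates the restricted $A$, so the recursion goes through. Your transversal formulation has no such built-in domination hypothesis, which is exactly why the local step resists.

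Concretely, in your plan nothing you have recorded about $(X_q,P_q)$ forces either manoeuvre. Scattering $Q^\ast$ in $G-S$ constrains the $q$'s, not the $k$ vertices of $X_q$, which may sit anywhere; only $a_q,b_q$ are near $q$. Equal distance profiles on $S$ let you transport short \emph{walks}, not short \emph{paths}, between configurations, so ``same type'' does not obviously license splicing two $X_q$'s into an $r$-independent $k$-set, nor does it produce an $X'_q$ with $U_{X'_q}\cap Q=\emptyset$. (The $\Odd/\Even$ symbols you allude to play no role in the paper's argument, so they will not rescue this.) The missing idea is precisely the cowitness layer: it replaces your undefined ``local type'' by the concrete bookkeeping of profiles on $S$ together with the greedy covering set $Z$, and it replaces your perturbation/splicing dichotomy by the single exchange argument of Theorem~\ref{CowitnessIsWitness}.
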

 In our proof, a key role is played by the notion of a cowitness defined below. 
\begin{definition}  
  Fix a graph  $G$ with a set of vertices $A \subseteq V(G)$. A \emph{$(k, r)$-cowitness} for the pair $(A, G)$ is a set $Q \subseteq V(G)$ such that for every $X \subseteq A$ with  $|X| = k$ and $A \subseteq N_r(X)$, for every $a \in A$, there is a path beginning at $a$, ending at some vertex belonging to $X$, passing through $Q$, and of length at most $r$.
  (Intuition: the fact that $a \in N_r(X)$ is witnessed by $Q$ for every $X$ which covers all of $A$ --- not only $a$.)
\end{definition}

   As a side effect, in this work we found a new characterization of nowhere dense graph classes --- nowhere dense graph classes are exactly classes of graphs which admit a \emph{cowitness} of constant size. We define the notion of cowitness and state the characterization below:
\begin{restatable}[Characterization by small cowitnesses]{theorem}{cowitnessCharGen}
\label{Cowitness}
   Fix a class of graphs $\Cl$, closed under taking subgraphs. The the following conditions are equivalent:
   \begin{enumerate}
   \item $\Cl$ is nowhere dense,
   \item For all $r, k \in \Nat$ there is a constant $c$ such that for every $G \in \Cl$ and every $A \subseteq G$ there is a $(k, r)$-cowitness for the pair $(A, G)$ of size at most $c$,
   \item For all $r \in \Nat$ there is a constant $c$ such that for every $G \in \Cl$ there is a $(1, r)$-cowitness for the pair $(V(G), G)$ of size at most $c$.
   \end{enumerate}
\end{restatable}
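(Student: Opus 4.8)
The plan is to close the cycle $(2)\Rightarrow(3)\Rightarrow(1)\Rightarrow(2)$. The implication $(2)\Rightarrow(3)$ is pure specialisation: instantiate $(2)$ with $k=1$ and $A=V(G)$ and observe that for a singleton $X=\{x\}$ the hypothesis ``$|X|=1$ and $A\subseteq N_r(X)$'' of the $(1,r)$-cowitness condition is literally ``$V(G)\subseteq N_r(x)$'', while its conclusion is verbatim the property demanded in $(3)$; so the constant from $(2)$ transfers directly.

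For $(3)\Rightarrow(1)$ I would argue the contrapositive by extracting subdivided cliques. Assume $\Cl$ is subgraph-closed but somewhere dense. By the standard characterisation of somewhere-dense subgraph-closed classes there is an integer $p$ such that for every $n$ the $p$-subdivision $K_n^{(p)}$ of $K_n$ --- replace every edge by an internally disjoint path on $p$ inner vertices --- lies in $\Cl$; if the characterisation yields $p=0$, then $\Cl$ contains all complete graphs and hence, being subgraph-closed, the $1$-subdivision of every complete graph, so we may assume $p\ge 1$. Fix $r:=2p+1$, set $G:=K_n^{(p)}$, and let $B$ be its $n$ branch vertices. First I would check that every $x\in B$ satisfies $N_r(x)=V(G)$, so that the singleton $X=\{x\}$ meets the premise of the cowitness condition for each branch vertex. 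Then comes the crucial rigidity: for distinct $a,x\in B$ the only $a$--$x$ path of length at most $r$ is the subdivided edge joining them, since any path through a third branch vertex has length at least $2(p+1)=r+1$, and a path can leave a subdivided edge only at its two branch ends. Consequently any $(1,r)$-cowitness $Q$ for $(V(G),G)$ must contain an inner vertex of the subdivided edge $\{a,x\}$ for every pair of distinct branch vertices $a,x\notin Q$; as distinct subdivided edges have disjoint interiors this forces $|Q|\ge\binom{n-|Q|}{2}$, impossible for large $n$. Hence no constant works in $(3)$ for this $r$, so $\Cl$ was nowhere dense after all.

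The bulk of the work is $(1)\Rightarrow(2)$, which I would prove by induction using the splitter-game characterisation of nowhere density (Section~\ref{Splitter}); the same wideness circle of ideas underlying the small domination cores of Theorem~\ref{SmallDomCore} is what I expect to be needed. Fix $r$ and $k$, choose a game radius $R$ equal to a large enough constant multiple of $r$, and let $\ell$ be such that Splitter wins the $(\ell,R)$-splitter game on every graph of $\Cl$; since $\Cl$ is closed under subgraphs, every board occurring in such a game is again in $\Cl$ with a strictly shorter winning game. I would prove by induction on $\ell$ that some constant $c(\ell,r,k)$ bounds the cowitness size for all pairs $(A,G)$ with $G$ such a board. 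The base $\ell=0$ is trivial. In the inductive step two easy cases come first: if no $k$-subset of $A$ $r$-dominates $A$ then $Q=\emptyset$ works vacuously, and if $|A|$ is below a threshold depending on $(r,k,\ell)$ then a bounded cowitness exists outright (e.g.\ $A$ together with one inner vertex of a shortest path between each pair of vertices of $A$ at distance $\le r$). Otherwise $|A|$ is large; using wideness together with Splitter's strategy I would select a single Connector move $v_0$ and a bounded set $S$ so that, modulo $S$, only a bounded set $A_{\mathrm{bad}}\subseteq A$ of vertices can have their domination by a valid $X$ witnessed \emph{only} by paths leaving the ball $N_R(v_0)$. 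With $w_0$ Splitter's reply and $G_1:=G[N_R(v_0)]-w_0$ a board won in at most $\ell-1$ rounds, the induction yields a $(k,r)$-cowitness $Q_1$ for $(A\cap V(G_1),G_1)$ of size $\le c(\ell-1,r,k)$, and I would set $Q:=Q_1\cup S\cup A_{\mathrm{bad}}\cup\{v_0,w_0\}$, so that $c(\ell,r,k)=c(\ell-1,r,k)+O(1)$. Passing from $k=1$ to general $k$ uses the same template with extra bookkeeping of which of the $k$ balls of $X$ a given vertex of $A$ is routed to.

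The main obstacle is the correctness of this gluing. Deleting $w_0$ increases distances and can destroy the relation $A\subseteq N_r(X)$ inside $G_1$, so neither the validity of $X$ nor the shortness of a given $a$--$X$ path passes automatically to the subgame. Making it go through requires taking $R$ a sufficiently large multiple of $r$ and in fact running the induction for \emph{all} radii up to $R$ simultaneously, so that a short $a$--$X$ path in $G$ which merely dodges $w_0$ can be rerouted inside $N_R(v_0)$ into a path still of length $O(r)$ that witnesses domination of $a$ in $G_1$ and therefore meets $Q_1$; and it requires leaning on wideness strongly enough to keep $A_{\mathrm{bad}}$ bounded, so that the finitely many vertices of $A$ whose witnessing paths genuinely escape $N_R(v_0)$ can be absorbed wholesale into the $O(1)$ top part of $Q$. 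Balancing the threshold on $|A|$, the size of $S$, the size of $A_{\mathrm{bad}}$, and the radius accounting against one another is where the real effort of the proof lies.
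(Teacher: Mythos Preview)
Your $(2)\Rightarrow(3)$ is the paper's (immediate), and your $(3)\Rightarrow(1)$ is essentially the paper's subdivided-clique argument with a slightly larger radius and a slightly different count; both are correct.

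The sketch for $(1)\Rightarrow(2)$ has a real gap. You propose a \emph{single} Connector move $v_0$, one recursive call inside $N_R(v_0)$, and absorption of the rest into a bounded $A_{\mathrm{bad}}$, yielding the additive recurrence $c(\ell,r,k)=c(\ell-1,r,k)+O(1)$. This cannot work for $k\ge 2$. Let $A$ be the union of $k$ balls of radius $r$ whose centres are pairwise at distance far exceeding $R$ (such configurations exist already among forests, hence in any nontrivial nowhere dense class). The set $X$ of the $k$ centres satisfies $X\subseteq A$ and $A\subseteq N_r(X)$, so the cowitness condition is not vacuous; yet any ball $N_R(v_0)$ meets at most one of the $k$ pieces, so $A\setminus N_R(v_0)$ --- and hence your $A_{\mathrm{bad}}$ --- is as large as you like. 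Wideness does not help: each piece can itself be arbitrarily large. The additive shape of your recurrence is already a warning sign, since the paper's bound is multiplicative in the game length.

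The paper fixes this by \emph{branching} rather than making a single call. It strengthens the inductive statement (Theorem~\ref{CowitnessExistenceExt}) to carry a separator $S$ with $S\subseteq Q$ and $|S|=s$; it partitions $A$ by distance profile on $S$ into classes $T_p$ and applies the greedy lemma (Lemma~\ref{GreedyLemma}) to each $T_p$ in $G'=G-S$: either $T_p$ contains a $2r$-scattered set of size $>k$ (then every $(X,a)$ with $a\in T_p$ and $A\subseteq N_r(X)$ is already captured by $S$), or there is $Z_p\subseteq T_p$ of size $\le k$ with $T_p\subseteq N_{2r}^{G'}(Z_p)$. Setting $Z=\bigcup_p Z_p$, one recurses on the pair $(G_z, A_{z,l})$ for \emph{every} centre $z\in Z$ and \emph{every} profile $l$ on $S$ (the profile encodes how the unknown $X$ interacts with $S$), with new separator $S\cup\{w_z\}$. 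The recurrence is multiplicative, $c(r,k,d,s)\le (k{+}1)(r{+}1)^{2s}\,c(r,k,d{-}1,s{+}1)$.

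Finally, the obstacle you correctly identify --- that removing $w_0$ distorts distances --- is handled in the paper not by inducting over a range of radii, but precisely by carrying $S$ through the recursion: pairs $(X,a)$ already captured by $S$ are discharged, and for the remaining ones the profile $l=\profile_S(X)$ is exactly the bookkeeping needed to recover, inside $G_z$, which vertices of $A_z$ are still owed a short path to $X$.
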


\paragraph{My contribution.}
   In this work, two algorithms for the $r$-independent set problem are described. The ladder algorithm is invented by Michał Pilipczuk, Szymon Toruńczyk and Sebastian Siebertz. They originally developed an analogous algorithm for $r$-dominating set and adopted for independent set. The proof of correctness of the modified algorithm requires an analogous statement to Theorem \ref{SmallDomCore}, which is given by Theorem \ref{WitnessExistence}. The statement dualizes easily, but the known proof does not.

  My contribution is the proof of Theorem \ref{WitnessExistence}. During the proof I developed the definition of cowitness and found that it allows to characterize nowhere dense graph classes. The proof is constructive and allows a direct construction of an algorithm for $r$-independent set, which is the second algorithm presented here.
  
  I wish to thank Michał Pilipczuk and Szymon Toruńczyk, who guided me during this work.

\paragraph{Organization of this work.}
  In Chapter 2 we recall some standard definitions and theorems from the theory of nowhere dense graph classes and prove Theorems \ref{WitnessExistence} and \ref{Cowitness}. In Chapter \ref{chapterAlgo} we use this result to construct two linear-time algorithms for the $r$-independent set  problem --- the first is a direct consequence of the proof given, and the second is the original ladder algorithm.

\section{Witness existence}

\subsection{Notation}

 Fix parameters $r \in \Nat$ (radius of independence), $k \in \Nat$ (size of an $r$-independent set to be found) and a graph $G$.
 
  We write $A \subseteq G$ as a shortcut for $A$ being a subset of vertices of $G$. Similarly $u \in G$ means that $u$ is a vertex of $G$. The subgraph induced by a set of vertices $A$ will be denoted by $G[A]$. We measure the size of the graph $G$ by $|G|:=|V(G)|+|E(G)|$.

 The following definition implicitly depends on $r$ and $k$:
\begin{definition}
  \emph{A short path} is a path of length at most $r$.

  We say that \emph{a set $Q \subseteq G$ captures a set $Y \subseteq G$} if and only if there exists a short path connecting two different vertices in $Y$ that passes through some vertex in $Q$ (intuition: the set $Q$ witnesses that $Y$ is not an $r$-independent set).

  We say that \emph{$Q \subseteq G$ captures a pair $(Y, z)$}, where $z \in G$ and $Y \subseteq G$, if and only if there exists a short path connecting some $y \in Y$ with $z$ passing through some vertex in $Q$ (intuition: the set $Q$ witnesses that $\dist(Y, z) < r$).
\end{definition}
  We will use the following simple fact, which is immediate from the definition:
\begin{claim}
  Capturing is a monotone property: if $X \subseteq Y, Q \subseteq W$ and $X$ is captured by $Q$, then $Y$ is captured by $W$ as well.

  Version for pairs: if $X \subseteq Y$, $Q \subseteq W$, $a \in G$ and $(X, a)$ is captured by $Q$ then $(Y,a)$ is captured by $W$.
\end{claim}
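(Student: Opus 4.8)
The plan is to prove both statements by directly unfolding the definition of ``captures'', since no graph-theoretic content beyond the definition is needed: in each case the witnessing object (a short path) is literally the same in the hypothesis and in the conclusion, and only the membership of its relevant vertices changes ambient set.

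For the set version, I would start from the assumption that $X$ is captured by $Q$. By definition this yields a short path $P$ whose two endpoints are \emph{distinct} vertices $x_1, x_2 \in X$ and which passes through some vertex $q \in Q$. Since $X \subseteq Y$ we have $x_1, x_2 \in Y$, and since $Q \subseteq W$ we have $q \in W$; the endpoints remain distinct because enlarging the set cannot identify two already-distinct vertices. Hence the very same path $P$ witnesses that $Y$ is captured by $W$.

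For the pair version the argument is identical in spirit: from the assumption that $(X,a)$ is captured by $Q$ we obtain a short path $P$ joining some $x \in X$ to $a$ and passing through some $q \in Q$. Then $x \in Y$ (as $X \subseteq Y$) and $q \in W$ (as $Q \subseteq W$), so $P$ witnesses that $(Y,a)$ is captured by $W$.

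There is essentially no obstacle here: the proof is a one-line unfolding of definitions in each case. The only point worth stating explicitly — and the closest thing to a ``subtlety'' — is that the requirement of two \emph{different} endpoints in the set version is automatically preserved when passing from $X$ to the larger set $Y$. No induction, case analysis, or appeal to nowhere denseness is involved.
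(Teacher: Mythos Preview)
Your proposal is correct and matches the paper's treatment: the paper simply states that the claim is ``immediate from the definition'' and gives no further argument. Your unfolding of the definition of capturing is exactly what is intended, and your remark about distinctness of the two endpoints being preserved under enlarging $X$ to $Y$ is the only (minor) point worth making explicit.
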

  
  The notions of a witness and of a cowitness are key combinatorial properties in this study. We repeat the definition, now using the terminology of capturing:

\begin{restatable}{definition}{witnessDefGen}
  Fix a graph $G$. A \emph{$(k, r)$-witness} (that $G$ does not have an $r$-independent set of size $k$ in $A$) is a set $Q \subseteq G$ such that every $X \subseteq G$, with $ |X| = k$, is captured by $Q$ (intuition: $X$ is not $r$-independent, and this fact is witnessed by $Q$).
\end{restatable}

\begin{definition}  
  Fix a graph $G$ with a set of vertices $A \subseteq G$. A \emph{$(k, r)$-cowitness} for $(G, A)$ is a set $Q \subseteq G$ such that for every $X \subseteq A$ with  $|X| = k$ and $A \subseteq N_r(X)$, for every $a \in A$, $(X, a)$ is captured by $Q$ (intuition: the fact that $a \in N_r(X)$ is witnessed by $Q$ for every $X$ which covers all of $A$ --- not only $a$).
\end{definition}
  Being a witness also has the monotonicity property: if $Q$ is a witness with parameters $k, r$, then it is also a witness with parameters $k', r$ for any $k' > k$.
  
  The goal of this section is to prove (we repeat the statements from the introduction):
\WitnessExistenceGen*
  This theorem follows from Theorems \ref{Cowitness} and \ref{CowitnessIsWitness}, given below:
\cowitnessCharGen*
   We prove implication $(1) \so (2)$ in Section \ref{CowitnessExistence}. Implication $(2) \so (3)$ is immediate, while implication $(3) \so (1)$ will be proved in Section \ref{CowitnessToNoDense}.

   The notion of a cowitness is much less intuitive than the notion of a witness --- a cowitness does not witness any property of a graph (in any $G$, the whole vertex set is a cowitness for every $A \subseteq G$). Despite this, we have:

\begin{restatable}[Cowitness is a witness (whenever possible)]{theorem}{cowitnessIsWitnessR}
 \label{CowitnessIsWitness}
   Fix a graph $G$ and a $(r,k-1)$-cowitness $Q$ for the pair $(V(G), G)$. If in $G$ there is no $r$-independent set of size $k$, then $Q$ is a witness of this.
\end{restatable}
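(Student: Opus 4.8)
The plan is to argue by contraposition. Suppose some $X \subseteq V(G)$ with $|X| = k$ is \emph{not} captured by $Q$; I will then produce an $r$-independent set of size $k$ in $G$, contradicting the hypothesis. It is convenient to first unfold the relevant property of $Q$: since $Q$ is a $(k-1,r)$-cowitness for $(V(G),G)$, for every $Y \subseteq V(G)$ with $|Y| = k-1$ which $r$-dominates $G$ and every $a \in V(G)$, the pair $(Y,a)$ is captured by $Q$. In contrapositive form: whenever $|Y| = k-1$ and some pair $(Y,a)$ is \emph{not} captured by $Q$, the set $Y$ does not $r$-dominate $G$, so there is a vertex at distance more than $r$ from all of $Y$.

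Fix a labelling $X = \{x_1\} \cup \{u_1,\dots,u_{k-1}\}$. The idea is to replace the $u_t$ one at a time by ``far-away'' witnesses $w_t$. Concretely, I build $w_1,\dots,w_{k-1} \in V(G)$ such that, setting
\[
  Y_t \;:=\; \{x_1\}\cup\{w_1,\dots,w_{t-1}\}\cup\{u_{t+1},\dots,u_{k-1}\},
\]
we have $|Y_t| = k-1$ and $\dist(w_t,Y_t) > r$ for $t = 1,\dots,k-1$. Once this is done, $Z := \{x_1,w_1,\dots,w_{k-1}\}$ works: each $w_t$ lies at distance $>r$ from $\{x_1,w_1,\dots,w_{t-1}\}\subseteq Y_t$, so the $k$ vertices of $Z$ are pairwise at distance $>r$, i.e.\ $Z$ is an $r$-independent set of size $k$.

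The construction proceeds by induction on $t$, and the crux is the following claim: the pair $(Y_t,u_t)$ is \emph{not} captured by $Q$. Indeed, a capturing path would be a path of length $\leqslant r$ through $Q$ from some $y \in Y_t$ to $u_t$, and $y \neq u_t$ since $u_t \notin Y_t$. If $y \in \{x_1\}\cup\{u_{t+1},\dots,u_{k-1}\}$, then both endpoints lie in $X$, so this path witnesses that $X$ is captured by $Q$ --- contradicting our assumption. If instead $y = w_l$ for some $l < t$, then $\dist(w_l,u_t)\leqslant r$; but by construction $\dist(w_l,Y_l) > r$, and $u_t \in \{u_{l+1},\dots,u_{k-1}\}\subseteq Y_l$ because $l < t$ --- again a contradiction. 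Hence $(Y_t,u_t)$ is not captured, and since $|Y_t| = k-1$ the (contrapositive of the) cowitness property supplies a vertex $w_t$ with $\dist(w_t,Y_t) > r$, which is exactly what the induction requires; the bookkeeping that $|Y_t| = k-1$ and that the listed vertices are distinct is immediate from the choices already made.

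The one delicate point is precisely this bookkeeping: the cowitness only speaks about sets of size exactly $k-1$, which forces the ``replace a single $u_t$, keep the rest as padding'' scheme, and the labelling must be arranged so that each previously chosen $w_l$ is automatically far from the current target $u_t$ --- this is the reason $Y_l$ is defined to retain all of $u_{l+1},\dots,u_{k-1}$, guaranteeing $u_t \in Y_l$ for every $t > l$. With that arrangement in place, the remaining facts used --- monotonicity of capturing, and that a set of pairwise far vertices is $r$-independent --- are routine.
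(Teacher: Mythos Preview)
Your proof is correct and rests on the same mechanism as the paper's: given an $X$ of size $k$ not captured by $Q$, repeatedly replace an element by a vertex at distance $>r$ from the remaining $k-1$ elements, with the cowitness property (in contrapositive form) guaranteeing such a vertex exists whenever the pair is not captured. The paper packages this as a minimality argument --- it picks $X$ minimizing the potential $f(X)=|\{w\in X:\exists v\in X,\,v\ne w,\,\dist(v,w)\le r\}|$, does a single replacement of a ``close'' vertex $w$ by a far one $w'$, and observes $f(X')<f(X)$ --- whereas you unroll the same replacement explicitly across $u_1,\dots,u_{k-1}$, keeping the not-yet-processed $u_j$ as padding to maintain size $k-1$. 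Your version is slightly more direct (no potential function, no choice of a close pair), and makes the constructive content of Theorem~\ref{CowitnessIsWitnessComp} immediate; the paper's version is a touch shorter since the minimality hypothesis collapses the induction into one step.
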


 We devote Section \ref{CowitnessIsWitnessProof} to the proof of this theorem.
 
\begin{proof}[Proof of Theorem \ref{WitnessExistence}]
 By Theorem \ref{CowitnessIsWitness}, it is enough to show the existence of cowitness of size bounded by $C$ (with modified parameter $k$). Such a cowitness exists by implication $(1) \so (2)$ of Theorem \ref{Cowitness}.
\end{proof}

\subsection{Preliminaries}
 In this section, we provide some definitions and results used in the proof of Theorem \ref{WitnessExistence} and \ref{CowitnessIsWitness}.

 Through this section we fix $r \in \Nat$.
\subsection{Definitions}
\begin{definition}
  By $\dist^H(v,w)$ we will denote distance between vertices $v$ and $w$ computed in graph a $H$, ie. the length of a shortest path in $H$, connecting $v$ and $w$.
  
  The $r$-neighborhood of a vertex  $x$ in a graph $g$ is the set $\{v \in G | \dist^H(v, x) \leqslant r\}$, denoted $N_r^H(x)$.
  We drop the superscript, writing $\dist(v,w)$ and $N_r(v)$, when $H$ is clear from context.
\end{definition}

  We will be interested in distances from some vertex $v$ to a (small) set $S$, but only up to some threshold. We call such functions distance profiles. Formally, we define:
\begin{definition}
  A \emph{profile on $S$} is a function $S \so [0, 1, \ldots, r, r+1]$ ($r+1$ should be seen as equivalent to $+\infty$). A set of profiles on $S$ will be dentoes as $\Pl(S)$.
  
  The \emph{profile associated to a vertex} $v \in G$ with respect to $S \subseteq G$ is the profile on $S$ such that for $s \in S$, $\profile_S(v)(s) = \min( \dist(v, s), r+1)$.
\end{definition}
  The following straightforward claim shows the importance of profiles:
\begin{claim}
  Fix $X, Q \subseteq G$. For every $a \in G$, whether $Q$ captures $(X, a)$ depends only on $\profile_Q(a)$. More precisly, if $a, b \in G$ are such that $\profile_Q(a)=\profile_Q(b)$ then $Q$ captures $(X, a)$ if and only if $Q$ captures $(X, b)$.
\end{claim}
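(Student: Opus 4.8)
The plan is to rewrite the predicate ``$Q$ captures $(X,a)$'' so that it refers to $a$ only through the distances $\dist(a,q)$ with $q\in Q$ --- equivalently, only through $\profile_Q(a)$ --- after which the claim drops out.

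First I would record the distance reformulation of capturing: for fixed $X,Q\subseteq G$ and any $a\in G$, the set $Q$ captures $(X,a)$ exactly when $\dist(a,q)+\dist(q,x)\leqslant r$ for some $q\in Q$ and some $x\in X$. Necessity is immediate, since a short path witnessing capture runs from some $x\in X$ to $a$ and meets some $q\in Q$, and cutting it at $q$ exhibits its length as at least $\dist(x,q)+\dist(q,a)$. For sufficiency one concatenates a shortest $x$--$q$ path with a shortest $q$--$a$ path, yielding a short path from $x$ to $a$ through $q$. This is the only place any graph-theoretic argument is used, and it is routine.

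Next I would fix $X$ and $Q$ and introduce, for $q\in Q$, the abbreviation $\delta(q):=\min\bigl(r+1,\ \min_{x\in X}\dist(q,x)\bigr)$, which is a single function of $q$ and does not involve $a$. Combining it with the reformulation above, and noting that truncating a distance at $r+1$ never changes whether a sum containing it is $\leqslant r$, I would conclude
\[
  Q \text{ captures } (X,a) \quad\Longleftrightarrow\quad \exists\, q\in Q:\ \profile_Q(a)(q)+\delta(q)\leqslant r .
\]
The right-hand side depends on $a$ solely through $\profile_Q(a)$. Hence if $a,b\in G$ satisfy $\profile_Q(a)=\profile_Q(b)$, the two instances of the right-hand side coincide, so $Q$ captures $(X,a)$ if and only if $Q$ captures $(X,b)$; in particular, for fixed $X,Q$, whether $Q$ captures $(X,a)$ is a function of $\profile_Q(a)$ alone, which is the claim.

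I do not expect a genuine obstacle here: once the distance reformulation is in place, the rest is bookkeeping. The only two points deserving an explicit sentence are the necessity/sufficiency of the distance condition (splitting the path at $q$; concatenating shortest paths) and the harmless truncation of distances at $r+1$ when passing to profiles.
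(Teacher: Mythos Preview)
Your proposal is correct and is exactly the argument the paper has in mind: the paper does not spell out a proof of this claim (it calls it ``straightforward''), but the very next statement, Proposition~\ref{ProfileLemma}, records precisely your reformulation --- $Q$ captures $(X,a)$ iff $\profile_Q(a)+\profile_Q(X)$ takes some value $\leqslant r$ --- and declares it immediate from definitions; your $\delta$ is the paper's $\profile_Q(X)$. So you are filling in the same one-line computation the paper omits.
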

  We need a slightly stronger property --- we have to understand the dependence on the set $X$ as well as on $a$. This motivates the following notions:

\begin{definition}
   Fix $S \subseteq G$. The \emph{profile associated to a subset $X \subseteq G$} is the following profile on $S$:
  \[  \profile_S(X)(s) :=  \min_{x \in X} \min(\dist(s, x), r+1) \]

  The \emph{trace} of a function $p : S \so \Nat$ in a graph $G$ is the set defined as
  \[ [p] := \bigcup_{s \in S} N^G_{p(s)}(s) \]
\end{definition}  
  We can now state the stronger proposition:
\begin{proposition}\label{ProfileLemma}
 Fix $Q \subseteq G$. Then for every $X \subseteq G$ and $a \in G$, $Q$ captures $(X, a)$ if and only if the function $\profile_Q(a) + \profile_Q(X)$ does not have all values greater than $r$.
 
 The set of such $a$ is equal to $[r-\profile_Q(X)]$ (where $r$ is considered as a constant function).
\end{proposition}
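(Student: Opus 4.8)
The plan is to reduce the combinatorial condition ``$Q$ captures $(X,a)$'' to a purely metric one, and then to read off both assertions by bookkeeping the truncation at $r+1$. Concretely, I would first establish the auxiliary equivalence
\[
  Q \text{ captures } (X,a)
  \quad\Longleftrightarrow\quad
  \exists\, q \in Q,\ x \in X \text{ such that } \dist(a,q) + \dist(q,x) \leqslant r .
\]
For the forward implication, take a short path $P$ witnessing the capture: it runs between some $x \in X$ and $a$, has length at most $r$, and visits some $q \in Q$. Splitting $P$ at an occurrence of $q$ gives an $a$--$q$ piece and a $q$--$x$ piece whose lengths sum to $|P| \leqslant r$, and each piece is at least the corresponding distance, which yields the inequality. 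For the backward implication, concatenate a shortest $a$--$q$ path with a shortest $q$--$x$ path; this is a short walk from $a$ to $x$ through $q \in Q$ of length $\dist(a,q)+\dist(q,x) \leqslant r$, i.e.\ exactly a short path certifying that $Q$ captures $(X,a)$.

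Next I would translate the right-hand side into profiles. Writing $p := \profile_Q(a) + \profile_Q(X)$, the statement ``$p$ does not have all values greater than $r$'' means there is $q \in Q$ with $\profile_Q(a)(q) + \profile_Q(X)(q) \leqslant r$. One direction is immediate: if $\dist(a,q)+\dist(q,x)\leqslant r$ then, since profiles are pointwise at most the corresponding (un-truncated) distances, $\profile_Q(a)(q)+\profile_Q(X)(q)\leqslant r$. For the converse, the key observation is that, because both summands are non-negative, $\profile_Q(a)(q) + \profile_Q(X)(q) \leqslant r$ forces each summand to be at most $r$; hence neither equals the truncation value $r+1$, so $\profile_Q(a)(q) = \dist(a,q)$ and $\profile_Q(X)(q) = \min_{x \in X}\dist(q,x)$ (the minimum being attained), and the profile inequality says precisely that $\dist(a,q) + \dist(q,x) \leqslant r$ for some $x \in X$. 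Together with the auxiliary equivalence this gives the first assertion.

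For the second assertion, fix $X$ and describe the set $\{a \in G : Q \text{ captures } (X,a)\}$. By the first part it equals $\bigcup_{q \in Q}\{a : \dist(a,q) \leqslant r - \profile_Q(X)(q)\}$: indeed $\profile_Q(a)(q) \leqslant r - \profile_Q(X)(q)$ holds iff $\dist(a,q) \leqslant r - \profile_Q(X)(q)$, because the right-hand side never exceeds $r$ and the truncation of $\dist(a,q)$ at $r+1$ is immaterial below that threshold. Each set in the union is the ball $N^{G}_{\,r - \profile_Q(X)(q)}(q)$ — empty when $\profile_Q(X)(q) = r+1$, consistent with reading $N_{-1}$ as $\emptyset$ — so the union is exactly the trace $[\,r - \profile_Q(X)\,]$ taken with $S = Q$, where $r$ is viewed as a constant function.

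I expect the only delicate point to be this truncation bookkeeping: one must keep track of the fact that replacing true distances by profiles (capped at $r+1$) changes nothing as long as a sum is required to be at most $r$, and that negative ``radii'' $r - \profile_Q(X)(q) = -1$ are harmless since they contribute empty balls. The walk-versus-path distinction in the backward implication is cosmetic, as capturing only asks for a short path through $Q$ and the concatenation above supplies one of the required length.
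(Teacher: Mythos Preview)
Your proposal is correct and is precisely the unpacking of definitions that the paper has in mind: the paper's own proof consists of the single sentence ``The proof is immediate from definitions'' together with the parenthetical remark that $N_{-1}(v)=\emptyset$, and your argument is a faithful expansion of that. Your handling of the truncation at $r+1$ and of the negative radius $-1$ matches the paper's intent exactly; the walk-versus-path remark is appropriate since the paper's notion of ``short path passing through $Q$'' is used throughout in the metric sense you identify (indeed the proposition would fail under a strict simple-path reading).
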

  The proof is immediate from definitions. Note that if $p$ is a profile then $r-p$ has range $\{-1, 0, \cdots, r\}$ and that $N_{-1}(v) = \emptyset$.  

\paragraph{Nowhere denseness.}
\label{Splitter}
  We present a characterization of nowhere dense graph classes obtained in [1]. The following statements are slight reformulations of Theorem 4.2 and Remark 4.3 from that work.

  \begin{definition} \emph{(Splitter game)} Let $G$ be a graph and let $l, r \in \Nat$. The $(l, r)$-splitter game on $G$ is
played by two players, \texttt{Connector} and \texttt{Splitter}, as follows. We let $G_0 := G$. In round $i + 1$ of the
game, Connector chooses a vertex $v_{i+1} \in G_i$ (further game play will be restricted to $N_r(v_{i+1})$). Then \texttt{Splitter} picks a vertex $w_{i+1} \in N_r^{G_i}(v_{i+1})$ --- a vertex to be removed from arena. We let new arena be $G_{i+1} := G_i[N_r^{G_i}(v_{i+1} ) - \{w_{i+1}\}]$. \texttt{Splitter} wins if $G_{i+1} = \emptyset$. Otherwise the game continues
on $G_{i+1}$. If \texttt{Splitter} has not won after $l$ rounds, then \texttt{Connector} wins.
 \end{definition}
 
\begin{theorem} \label{SplitterGame} Let $\Cl$ be a class of graphs. Then $\Cl$ is nowhere dense if and only if for every $r \in \Nat^+$ there is $l \in \Nat^+$, such that for every $G \in \Cl$, Splitter wins the $(l, r)$-splitter game on $G$.
\end{theorem}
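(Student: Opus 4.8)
The plan is to prove the two implications separately. The direction ``nowhere dense $\Rightarrow$ Splitter wins in a bounded number of rounds'' is the substantial one; the converse follows from an explicit Connector strategy read off from a dense shallow clique minor, which I describe first.

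Suppose $\Cl$ is \emph{not} nowhere dense: by definition there is $d$ such that for every $t$ some $G \in \Cl$ satisfies $K_t \preceq_d G$. Fix $l$, put $t := l + 2$, and take such a $G$ together with a family $(B_i)_{i \le t}$ witnessing $K_t \preceq_d G$, that is, pairwise disjoint vertex sets with each $G[B_i]$ connected of radius at most $d$ and a connecting edge $e_{ij}$ between $B_i$ and $B_j$ for every $i \ne j$. I claim Connector wins the $(l, 3d+1)$-splitter game on $G$. Connector maintains the invariant that after round $j$ there is an index set $I_j$ with $|I_j| \ge t - j$ such that, in the current arena $G_j$, every $G[B_i]$ with $i \in I_j$ is a subgraph of $G_j$ and every edge $e_{ii'}$ with $i, i' \in I_j$ survives in $G_j$. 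In round $j+1$ Connector picks the center $c$ of some $B_i$ with $i \in I_j$ (this vertex lies in $G_j$). For any $i' \in I_j$, a walk from $c$ inside $G[B_i]$ to the $B_i$-endpoint of $e_{ii'}$ (at most $d$ steps), across $e_{ii'}$ (one step), and then to any target inside $G[B_{i'}]$ (at most $2d$ steps, since a graph of radius at most $d$ has diameter at most $2d$) stays inside $G_j$ and has length at most $3d+1$; hence all of $B_{i'}$, and likewise every edge $e_{i'i''}$ (its endpoints lying in such branch sets), belongs to $G_j[N_{3d+1}^{G_j}(c)]$ and survives into $G_{j+1} := G_j[N_{3d+1}^{G_j}(c) - w_{j+1}]$, with the single exception of the at most one $B_{i'}$ that contains Splitter's deleted vertex $w_{j+1}$ (the $B_i$ being pairwise disjoint). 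Thus $|I_{j+1}| \ge |I_j| - 1$ and the invariant is restored. After $l$ rounds $|I_l| \ge 2$, so two branch sets and a connecting edge remain and $G_l \ne \emptyset$: Connector has won. Since $l$ was arbitrary, no $l$ works for radius $3d+1$, so the Splitter condition fails; contrapositively it forces nowhere denseness.

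For the forward direction I would use the wideness theory of nowhere dense classes. The engine is \emph{uniform quasi-wideness}: for every $\rho$ there are a constant $s = s(\rho)$ and, for each $m$, a bound $N(\rho,m)$, such that for every $G \in \Cl$ and every $W \subseteq G$ with $|W| \ge N(\rho,m)$ there exist $S \subseteq G$ with $|S| \le s$ and $W' \subseteq W \setminus S$ with $|W'| \ge m$ that is $\rho$-independent in $G - S$ --- the essential point being that $s$ does not depend on $m$. Granting this, I would construct Splitter's strategy by induction on a well-founded rank measuring the wideness of the ambient (subgraph-closed) class. In the base case (margin $s = 0$), once Connector confines play to a ball $N_r(v_1)$ the arena has radius at most $r$, hence contains no $2r$-independent pair, hence --- applying quasi-wideness with $\rho = 2r$ and $m = 2$ --- has size below the constant $N(2r,2)$, and Splitter clears it one vertex per round. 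In the inductive step, after Connector confines play to a ball $B = N_r(v)$: if $|B|$ is below the relevant threshold Splitter clears it directly; otherwise uniform quasi-wideness, applied with a suitable radius $\rho$ comparable to $r$, furnishes a deletion set $S$ with $|S| \le s$ whose removal uncovers a large $\rho$-independent set inside $B$, and Splitter spends the next at most $s$ rounds deleting the vertices of $S$ (any interleaved Connector move only shrinks the arena, which is harmless), after which the residual arena lies in a class of strictly smaller rank, to which the induction hypothesis applies with a smaller round budget. Summing the round budgets over the bounded recursion depth yields the desired $l = l(r, \Cl)$.

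The main obstacle is this forward direction, and it has two genuinely non-trivial components. First, uniform quasi-wideness is itself a structural theorem (in fact equivalent, for subgraph-closed classes, to nowhere denseness): passing between the ``no dense shallow minor'' definition of nowhere density and the scattered-set formulation is carried out by a Ramsey-type / random-deletion argument on the hypergraph of balls, and arranging that the deletion bound $s$ depends only on the radius $\rho$ and not on the target size $m$ is the delicate point and the place where nowhere denseness is genuinely used. Second, converting quasi-wideness into a \emph{finite}-round Splitter strategy requires care: one must pin down the right well-founded rank so that the induction terminates, control how the radius fed to quasi-wideness grows as Connector descends through nested balls, and manage the bookkeeping of interleaved Connector moves. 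I would therefore fix the rank and settle the base case first, and only then verify the inductive step along the lines sketched above.
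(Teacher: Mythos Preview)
The paper does not prove this theorem; it is quoted as a known characterization from Grohe, Kreutzer, and Siebertz (the reference announced in the paragraph immediately preceding the statement). There is thus no proof in the paper to compare against, and I assess your proposal on its own terms.

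Your backward direction is correct and cleanly argued. The Connector strategy that maintains at least $t-j$ surviving branch sets of a depth-$d$ model of $K_t$ is the standard construction, and the distance bound $d + 1 + 2d = 3d+1$ is right.

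The forward direction has a genuine gap. You correctly name uniform quasi-wideness as the engine, but the inductive scheme you sketch does not close. The fatal point is the undefined ``well-founded rank measuring the wideness of the ambient class'': the only natural candidate is the margin $s = s(\rho)$, and this does \emph{not} decrease after you delete $S$, because the residual arena is still a subgraph of a member of $\Cl$ and hence lies in the same subgraph-closed class with the same quasi-wideness parameters. So the sentence ``the residual arena lies in a class of strictly smaller rank'' is unsupported, and the induction has no reason to terminate. A second, related problem is the single-vertex-per-round rule: ``Splitter spends the next at most $s$ rounds deleting the vertices of $S$'' is not obviously sound, since each interleaved Connector move recenters the arena and may push the remaining vertices of $S$ outside it, and you give no argument that the intended effect of removing $S$ is nonetheless achieved. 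The original Grohe--Kreutzer--Siebertz proof avoids both issues by not inducting on any class-level invariant at all: Splitter follows an explicit local strategy tied to shortest paths between Connector's successive moves, and one shows directly that a game lasting too many rounds would assemble, inside a single $r$-ball, a large set that no bounded deletion can scatter---contradicting uniform quasi-wideness. You flag both difficulties in your last paragraph, but flagging them is not resolving them; as written the forward direction is an outline with a missing mechanism rather than a proof.
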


\begin{theorem}
   There is an algorithm $\Al$ which given a graph $G$ and $r \in \Nat$ plays a splitter game (interactively, playing as splitter) such that: for every nowhere dense class $\Cl$ of graphs, there is $d \in \Nat$ such that $\Al$ wins in $d$ rounds.
   
   Moreover, algorithm $\Al$ works in linear time --- given a move of the opponent, a response in the game is generated in linear time.
\end{theorem}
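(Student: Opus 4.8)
The plan is to produce a single Splitter algorithm $\Al$, depending only on its inputs $G$ and $r$, that is at the same time \emph{universal} --- its number of rounds on $G$ is a function of $G$ alone, with no reference to any ambient class --- and cheap to run. Universality, combined with Theorem~\ref{SplitterGame}, then yields $d$ for each nowhere dense $\Cl$ automatically, so I would split the work into two independent parts: first pin down such a universal strategy together with the round bound, then worry about the running time.

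\textbf{Universality and the round bound.} For a graph $H$ let $\operatorname{rk}_r(H) \in \Nat \cup \{\infty\}$ be the least $m$ for which Splitter wins the $(m,r)$-splitter game on $H$. The usual well-founded analysis of finite-horizon games shows this is well defined, with $\operatorname{rk}_r(\emptyset)=0$ and, for $H \neq \emptyset$,
\[
  \operatorname{rk}_r(H) \;=\; 1 + \max_{v \in H}\ \min_{w \in N_r^H(v)} \operatorname{rk}_r\big(H[N_r^H(v)-\{w\}]\big).
\]
Let $\Al$ reply to a Connector move $v_{i+1}$ in arena $G_i$ by deleting a vertex $w_{i+1} \in N_r^{G_i}(v_{i+1})$ that minimises $\operatorname{rk}_r$ of the resulting arena. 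By the displayed identity, whatever Connector does the rank of the arena drops by at least one each round, so $\Al$ wins in at most $\operatorname{rk}_r(G)$ rounds against every Connector, and $\Al$ mentions only $G$ and $r$. Finally, if $G$ lies in a nowhere dense class $\Cl$, Theorem~\ref{SplitterGame} gives an $l = l(\Cl,r)$ with Splitter winning the $(l,r)$-game on every member of $\Cl$, i.e.\ $\operatorname{rk}_r(G) \le l$; hence $d := l$ works.

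\textbf{Linear time.} The description above secretly walks the whole game tree, which is only polynomial, not linear, so this is where the real work lies. Two reductions narrow the target. Inside a round the arena is $H := G_i[N_r^{G_i}(v_{i+1})]$, and since a shortest $G_i$-path between two vertices of $N_r^{G_i}(v_{i+1})$ never leaves that ball, $H$ is obtained by a single BFS from $v_{i+1}$ and its distances agree with those of $G_i$; thus the arena update costs $O(|G|)$. Moreover, $\operatorname{rk}_r$ may be replaced by \emph{any} graph potential $\mu$ with $\mu(\emptyset)=0$, computable in linear time, such that for every nonempty $H$ and every $v$ there is $w \in N_r^H(v)$ with $\mu(H[N_r^H(v)-\{w\}]) \le \mu(H)-1$, and such that $\mu$ is bounded on every nowhere dense class; $\Al$ then deletes such a $w$ and terminates within $\mu(G)$ rounds. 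I expect the main obstacle to be exhibiting such a $\mu$ \emph{together} with the linear-time routine that reads off the required $w$ from $H$ and $v$ (a brute-force search over all candidates $w$ would already be quadratic). This is precisely the content one must extract from the proof of Theorem~\ref{SplitterGame} --- a uniform-quasi-wideness / scattering argument --- in which Splitter's reply is governed by a constant-size ``scattering'' set: the task reduces to checking that each primitive used there, namely building $r$-balls, computing $r$-profiles, and producing a constant-size scattering set, costs $O(|G|)$, so that one move costs $O(|G|)$ and the whole play costs $O(d \cdot |G|)$, linear once $\Cl$ and $r$ are fixed. Since the statement is, as already noted, a reformulation of Theorem~4.2 and Remark~4.3 of [1], a legitimate alternative is to invoke that construction verbatim and carry out only this last piece of bookkeeping.
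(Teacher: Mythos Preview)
The paper does not prove this theorem at all; it is imported from [1] (Grohe--Kreutzer--Siebertz) as a ``slight reformulation of Theorem 4.2 and Remark 4.3'', so there is no in-paper argument to compare against. Your own closing sentence --- invoke that construction verbatim --- is precisely what the paper does.

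On the substance of your sketch: the rank argument for the round bound is correct and standard. The quantity $\operatorname{rk}_r$ is finite on every finite graph (the arena loses a vertex each round), the rank-minimising reply is a strategy that depends only on $G$ and $r$, and Theorem~\ref{SplitterGame} bounds $\operatorname{rk}_r$ uniformly on any nowhere dense $\Cl$. That half is fine.

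Two corrections on the linear-time half. First, walking the game tree is not ``only polynomial'': without a depth bound known in advance the tree has up to $|V|^{\Theta(|V|)}$ nodes, and memoising over reachable arenas still faces exponentially many induced subgraphs; the algorithm as described is not even polynomial, let alone linear. Second, the claim that distances in $H=G_i[N_r^{G_i}(v_{i+1})]$ agree with those of $G_i$ is false --- a $6$-cycle with $r=2$ already breaks it --- though this is harmless for correctness since subsequent rounds work entirely inside $H$. More to the point, you do not actually exhibit the replacement potential $\mu$ nor the linear-time routine that extracts $w$ from $(H,v)$; you name the shape such an argument should take and then defer to [1]. That is an honest description of where the content lives, but as written the linear-time claim remains a plan rather than a proof, and the paper makes no pretence otherwise.
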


  We will use bounds on neighborhood complexity in nowhere dense graph classes. In \cite{DBLP:conf/icalp/EickmeyerGKKPRS17} the following result is proved.
\begin{theorem} \label{neigComp}
  Fix a class $\Cl$ of nowhere dense graphs. Then there is a function $\nc_{\Cl}$, such that for every $\epsilon \in \Rea, r \in \Nat$, every $G \in \Cl$ and every $A \subseteq G$, the number of different profiles on set $A$ obtained by vertices from $G$ is bounded by $\nc_{\Cl}(r, \epsilon)|A|^{1+\epsilon}$. 
\end{theorem}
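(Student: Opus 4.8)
The plan is to follow the approach of \cite{DBLP:conf/icalp/EickmeyerGKKPRS17}: reduce counting distance profiles to counting \emph{$r$-projections}, bound the number of distinct $r$-projections using the sub-polynomial behaviour of weak colouring numbers in nowhere dense classes, and then pay a bounded extra factor for the distance data attached to a projection.

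First I would set up the reduction. For $v \in G$ call $a \in A$ an \emph{$r$-projection point} of $v$ if some path of length at most $r$ from $v$ to $a$ has no internal vertex in $A$; write $M(v) \subseteq A$ for the set of such points and $\mu_v \colon M(v) \so \{0, \ldots, r\}$ for the shortest such ``$A$-avoiding'' length. A shortest path from $v$ to any $a \in A$ at distance at most $r$ decomposes at its first vertex in $A$, giving $\profile_A(v)(a) = \min_{a' \in M(v)} \min\bigl(\mu_v(a') + \dist(a', a),\, r+1\bigr)$, where $\dist(a',a)$ does not depend on $v$. Hence $\profile_A(v)$ is a function of the pair $(M(v), \mu_v)$ and of the fixed capped distance matrix of $A$, so it suffices to bound the number of such pairs.

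The core step is to bound the number of distinct sets $M(v)$ by $c\,|A|^{1+\epsilon/2}$; projections of size at most $1$ contribute at most $|A|+1$, so only the larger ones matter. Here I would first pass to a \emph{closure} $B$ with $A \subseteq B \subseteq V(G)$ and $|B| \leqslant c_1 |A|^{1+\epsilon/4}$ that preserves $r$-projections (so that $M(v)$ computed inside $G[B \cup \{v\}]$ still equals $M(v)$), built by iteratively adding, over a bounded number of rounds, short connecting paths; that $|B|$ stays near-linear is forced by nowhere denseness, since otherwise those paths would assemble into shallow minors of arbitrarily large cliques. Then $G[B]$ is a graph on near-linearly many vertices from a nowhere dense class, so $\mathrm{wcol}_{2r}(G[B]) \leqslant |B|^{\delta}$ for any prescribed $\delta$ once $|B|$ is large; fixing a witnessing order, a charging argument assigns to each large projection a vertex weakly $2r$-reachable from two of its points, each vertex being charged at most $\mathrm{wcol}_{2r}(G[B])^2$ times. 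Finally, since $\mu_v$ ranges over the unbounded set $\{0, \ldots, r\}^{M}$, rather than count $(M, \mu_v)$ directly I would build the profile up over radii $i = 1, \ldots, r$, the level-$i$ data being an ``$i$-projection plus a one-step correction'' controlled by the argument above with radius $i$; multiplying the $r$ bounds and absorbing the constants into $\nc_\Cl(r,\epsilon)$ yields $\nc_\Cl(r,\epsilon)\,|A|^{1+\epsilon}$.

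I expect the main obstacle to be that core step: that the number of distinct $r$-projections is \emph{near-linear} in $|A|$, not merely polynomial (a polynomial bound already follows, for instance, from bounded VC dimension of $r$-balls). Pushing the exponent down to $1+\epsilon$ appears to need nowhere denseness twice --- once to keep the closure $B$ near-linear, once as the sub-polynomial bound on $\mathrm{wcol}_{2r}(G[B])$ --- and making these two uses compose cleanly is the delicate point.
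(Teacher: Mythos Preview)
The paper does not prove this theorem at all: it is quoted as a result of \cite{DBLP:conf/icalp/EickmeyerGKKPRS17} and used as a black box, so there is nothing in the present paper to compare your proposal against. What you have written is a reasonable high-level reconstruction of the argument in that cited source --- reduce profiles to $r$-projection data $(M(v),\mu_v)$, pass to a near-linear closure $B\supseteq A$ in which projections are controlled via weak colouring numbers, and then account for the distance labels.

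One step of your sketch is looser than the original and, read literally, does not close. In \cite{DBLP:conf/icalp/EickmeyerGKKPRS17} the closure $B$ is built so that every $r$-projection $M(v)$ onto $B$ has \emph{size} at most some constant $t=t(r,\Cl)$ independent of $|A|$; the number of possible $\mu_v$ on a fixed $M(v)$ is then at most $(r+1)^t$, a constant absorbed into $\nc_\Cl(r,\epsilon)$, and the near-linear count of distinct sets $M(v)$ finishes the job. Your charging argument only bounds the \emph{number} of distinct projections, not their sizes, and your alternative for $\mu_v$ --- ``build the profile up over radii $i=1,\ldots,r$'' and multiply the $r$ bounds --- would, if each level carries a factor near-linear in $|A|$, yield $|A|^{\Theta(r)}$ rather than $|A|^{1+\epsilon}$. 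The fix is already latent in your closure step: arrange the closure so that projections have bounded size, then handle $\mu_v$ by the constant $(r+1)^t$ and drop the radius iteration.
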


\paragraph{Greedy lemma.}
  We will use the following lemma: 
\begin{lemma}[Greedy lemma] \label{GreedyLemma}
  Fix a graph $G$, a set $X \subseteq G$ and $r, k \in \Nat$. Then either:
\begin{enumerate}[label={(\arabic*)}]
 \item There is $Y \subseteq X$ such that $|Y| > k$ and $Y$ is $2r$-independent in $G$ (equivalently: for $x \neq y$ and $x,y \in Y$, balls $N_r(x), N_r(y)$ are disjoint).
 \item There is $Z \subseteq X$ such that $|Z| \leqslant k$ and $X \subseteq N_{2r}^{G}(Z)$.
\end{enumerate}
 Moreover, there is an algorithm which for given $G, X, r, k$ computes $Y$ or $Z$ as above in time $O(k|G|)$. 
\end{lemma}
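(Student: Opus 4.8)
The plan is to run the obvious greedy selection procedure and read off whichever of the two outcomes occurs. Concretely, I would build a sequence $x_1, x_2, \ldots$ of vertices of $X$ as follows: having chosen $Z_i = \{x_1, \ldots, x_i\}$ (with $Z_0 = \emptyset$), if some vertex of $X$ lies outside $N_{2r}^G(Z_i)$, pick any such vertex as $x_{i+1}$; otherwise stop. I would also abort the moment the index $i$ reaches $k+1$, since that is all we need for outcome (1). The procedure clearly terminates (after at most $|X|$ steps at the latest).

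If the procedure ever reaches step $k+1$, set $Y := \{x_1, \ldots, x_{k+1}\}$. By construction each $x_j$ was chosen outside $N_{2r}^G(Z_{j-1})$, so $\dist(x_j, x_i) > 2r$ for every $i < j$; since this holds for every pair, $Y$ is $2r$-independent, and $|Y| = k+1 > k$, giving outcome (1). (For the parenthetical equivalence: if $\dist(x,y) > 2r$ then $N_r(x) \cap N_r(y) = \emptyset$ by the triangle inequality, while if $\dist(x,y) \le 2r$ a suitable vertex on a shortest $x$–$y$ path lies in both balls.) Otherwise the procedure stops at some step $i \le k$ with every vertex of $X$ inside $N_{2r}^G(Z_i)$; then $Z := Z_i$ has $|Z| \le k$ and $X \subseteq N_{2r}^G(Z)$, which is outcome (2). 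Thus exactly one of the two cases always applies.

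For the algorithmic claim I would maintain a boolean array recording, for each vertex of $G$, whether it currently lies in $N_{2r}^G(Z_i)$; correctness rests on the identity $N_{2r}^G(Z_i) = \bigcup_{j \le i} N_{2r}^G(x_j)$. Each time a new centre $x_{i+1}$ is added, run a fresh breadth-first search from $x_{i+1}$ truncated at depth $2r$ and set the flag of every reached vertex; a BFS touches each vertex and edge at most once, so this costs $O(|G|)$. To decide whether to continue, scan $X$ for an unflagged vertex, again $O(|G|)$. Since there are at most $k+1$ additions, the total running time is $O(k|G|)$ (reading $O(k|G|)$ as $O((k+1)|G|)$), and returning $Y$ or $Z$ at the end is free.

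I do not expect a genuine obstacle here: this is a standard greedy/covering argument with a marking scheme. The only points requiring a little care are checking that the greedily chosen set is pairwise (not merely sequentially) $2r$-independent — which is immediate from the selection rule — and organising the neighbourhood computations incrementally, one truncated BFS per newly added centre, so that the bound comes out as $O(k|G|)$ rather than something depending on $r$ or on recomputing the whole covered region each round.
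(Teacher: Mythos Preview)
Your proposal is correct and follows essentially the same greedy argument as the paper: iteratively add to the current set any vertex of $X$ not yet $2r$-covered, stopping either when all of $X$ is covered (yielding $Z$) or when $k+1$ vertices have been selected (yielding the $2r$-independent set $Y$). Your algorithmic implementation via incremental truncated BFS and a marking array is slightly more explicit than the paper's, but the overall approach and running-time analysis are the same.
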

\begin{proof}
 The lemma can be proved using a greedy algorithm, which aims at constructing the set $Y$. Let us assume we have already constructed $Y \subseteq X, |Y| \leqslant k$, which is $2r$-independent. There are two possibilities:
\begin{itemize}
\item If $X \subseteq N_{2r}(Y)$, then we find $Z:=Y$ satisfying the second condition and terminate.
\item Otherwise, there is $v \in X$ such that $v \not\in N_{2r}(Y)$, so $Y \cup \{v\} $ is $2r$-independent and larger than $Y$, hence we continue with $Y := Y \cup \{v\}$.
\end{itemize}
  After $k+1$ repetitions of this procedure, we constructed a set $Y$ satisfying the first condition. Testing whether there exists $v \not\in N_{2r}(Y)$ can be easily done in linear time, so the final running time is $O(k|G|)$.
\end{proof}

\subsection{Proof of cowitness existence}
\label{CowitnessExistence}
  This section is devoted to the proof of the implication $(1) \so (2)$ from Theorem \ref{Cowitness}. By characterization of nowhere denseness in terms of the splitter game (Theorem \ref{SplitterGame}), it is sufficient to show the following theorem.
  \begin{theorem}[Cowitness existence]\label{CowitnessExistence2}
  For every $d, k, r \in \Nat$ there is $c \in \Nat$ such that for every graph $G$ such that the $(d, 3r)$-splitter game on $G$ is won by splitter, and every $A \subseteq G$, there is a $(k, r)$-cowitness $Q$ for the pair $(A, G)$ of size at most $c$.
  \end{theorem}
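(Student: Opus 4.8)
The plan is to prove the theorem by induction on the number $d$ of rounds in which Splitter wins, using the Greedy Lemma (Lemma~\ref{GreedyLemma}) to control the shape of $A$ and the winning strategy of Splitter to recurse into balls of radius $3r$.

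Two observations handle the base of the induction and several trivial situations. If $A$ has no valid dominating set --- no $X \subseteq A$ with $|X| = k$ and $A \subseteq N_r(X)$ --- then $Q := \emptyset$ is vacuously a cowitness. If $|A|$ is at most the bound $c = c(d,k,r)$ being constructed, then $Q := A$ works: for any valid $X$ and any $a \in A$ a shortest $a$--$X$ path exists (since $a \in N_r(X)$) and already passes through $a \in Q$. In particular, for $d = 1$ Splitter's winning in one round forces $G$ to be edgeless, so $N_r(X) = X$ and a valid $X$ forces $|A| = k$; hence $c(1,k,r) := k$ suffices.

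So assume $d \geqslant 2$, that $|A|$ is large, and that $A$ has a valid dominating set, and apply Lemma~\ref{GreedyLemma} to $(G, A, r, k)$. If it returns a $2r$-independent $Y \subseteq A$ with $|Y| > k$, then the balls $N_r(y)$ for $y \in Y$ are pairwise disjoint, so for any valid $X$ the map sending each $y \in Y$ to a vertex of $X \cap N_r(y)$ is injective, giving $|X| \geqslant |Y| > k$ --- a contradiction; thus no valid $X$ exists and $Q := \emptyset$ works. Otherwise it returns $Z \subseteq A$ with $|Z| \leqslant k$ and $A \subseteq N_{2r}(Z)$. Fix a winning strategy for Splitter in the $(d,3r)$-game on $G$. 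For each $z \in Z$ consider the play where Connector opens with $z$: Splitter answers with some $w_z \in N_{3r}^G(z)$, and the residual strategy wins the $(d-1,3r)$-game on $G_z := G[N_{3r}^G(z) - \{w_z\}]$. By the induction hypothesis $G_z$ admits a cowitness $Q_z$ of size at most $c(d-1,k,r)$ for the local instance at $z$ --- morally, with $A \cap N_{2r}^G(z) \setminus \{w_z\}$ as the set to be captured and $A \cap N_{3r}^G(z) \setminus \{w_z\}$ as the pool of admissible dominators. The output is $Q := \{w_z : z \in Z\} \cup \bigcup_{z \in Z} Q_z$, of size at most $k + k\cdot c(d-1,k,r)$, so (up to a bounded additive correction for the boundary handling below) $c(d,k,r) := k + k\cdot c(d-1,k,r)$ is a valid bound. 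To verify that $Q$ is a cowitness, one fixes a valid $X$ and $a \in A$, picks $z \in Z$ with $a \in N_{2r}^G(z)$, and takes $x_a \in X$ realizing $\dist^G(a,x_a) \leqslant r$ along a shortest path $\pi$. Since $a \in N_{2r}^G(z)$ and $|\pi| \leqslant r$, all of $\pi$ --- in particular $x_a$ --- lies in $N_{3r}^G(z)$. If $\pi$ meets $w_z$ (in particular if $a = w_z$ or $x_a = w_z$) then $w_z \in Q$ captures $(X,a)$. Otherwise $\pi$ lies in $G_z$, the trace $X' := (X \cap N_{3r}^G(z)) \setminus \{w_z\}$ (padded to size $k$ within the pool) is an admissible local dominator with $x_a \in X'$, and $a$ lies in the local set, so $Q_z$ captures $(X',a)$ inside $G_z$; since $X' \subseteq X$ and $G_z \subseteq G$, the witnessing short path is an $a$--$X$ path through $Q_z \subseteq Q$.

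The main obstacle is exactly this last step: to invoke $Q_z$ legitimately one needs $X'$ to genuinely $r$-dominate, inside $G_z$, the local set on which $Q_z$ was built, and this may fail --- either because some local vertex $b$ has all of its short paths to $X$ routed through the deleted vertex $w_z$ (so $\dist^{G_z}(b,X') > r$), or because the vertex of $X$ nearest to $b$ sits just outside the ball $N_{3r}^G(z)$. The resolution is that any such ``bad'' $b$ is already captured otherwise: a shortest $b$--$X$ path through $w_z$ certifies $\dist^G(b,w_z) \leqslant r$, so $w_z$ captures $(X,b)$, while a vertex whose nearest solution vertex escapes $N_{3r}^G(z)$ must lie outside $N_{2r}^G(z)$, hence inside $N_{2r}^G(z')$ for another centre $z'$ at which it behaves well. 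To make this into a clean induction one strengthens the statement being proved to carry a bounded ``exception set'' $F$ --- the removed vertices $w_z$ and the centres $z$ accumulated along the recursion --- requiring capture only for, and domination only of, vertices outside $N_r(F)$; vertices inside $N_r(F)$ are captured because $F$ is included in $Q$, and $|F|$ grows by only $O(k)$ per level, so it stays bounded. Checking that the radius $3r$ of the splitter game leaves precisely the slack needed between the $2r$-ball in which a vertex lives and the $3r$-ball in which its short paths to a solution must stay --- so that this bookkeeping closes up --- is where the real work lies.
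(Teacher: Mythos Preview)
Your overall architecture is right --- induction on $d$, Greedy Lemma to find centres $Z$, recurse into balls of radius $3r$, and strengthen the induction hypothesis by carrying a small ``separator'' set --- and this is exactly what the paper does. But the specific strengthening you propose has a genuine gap. You assert that ``vertices inside $N_r(F)$ are captured because $F$ is included in $Q$.'' This is false: for $F$ to capture $(X,a)$ you need some $f\in F$ with $\dist(a,f)+\dist(f,X)\leqslant r$, whereas $a\in N_r(F)$ only gives $\dist(a,f)\leqslant r$ for some $f$. If $\dist(f,X)>0$ the path through $f$ may be too long, while the actual short $a$--$X$ path may avoid $F$ entirely. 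So excluding $N_r(F)$ from the guarantee throws away vertices that are \emph{not} captured by $F$, and the union $F\cup\bigcup Q_z$ is not a cowitness.

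The paper fixes this by observing (Proposition~\ref{ProfileLemma}) that the set of $a$ for which $S$ captures $(X,a)$ is exactly $[r-\profile_S(X)]$, which depends on $X$ through its distance \emph{profile} on $S$, not merely through $S$ itself. Hence the recursion must branch not only over centres $z\in Z$ but also over all profiles $p\in\Pl(S)$, with the local instance at $(z,p)$ using $A_{z,p}:=A_z\setminus[r-p]$; when the global $X$ has $\profile_S(X)=p$, this removes precisely the vertices already captured through $S$, and what remains is genuinely $r$-dominated by $X\cap V(G_z)$ inside $G_z$. Two further differences flow from this: the separator $S$ is kept \emph{inside} $G_z$ (so $G_z=G[N_{3r}^{G-S}(z)\cup S]$), and the Greedy Lemma is applied not to $A$ but to each profile class $T_p=\{a\in A:\profile_S(a)=p\}$ in $G-S$ --- the point being that when $T_p$ contains a large $2r$-independent set in $G-S$, every $a\in T_p$ is already captured by $S$ (since all vertices of $T_p$ share the same distances to $S$). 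This profile machinery is the missing technical ingredient; your $N_r(F)$-exclusion is too coarse a proxy for it.
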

  We want to prove this theorem by induction on $d$. To do so, we must strengthen the claim. We introduce a separator $S$ --- a small set of vertices, such that $G - S$ is simpler (in the sense that splitter can win in a smaller number of rounds). Additionally, throughout the recursion we will maintain the additional requirement that $S \subseteq Q$.
  \begin{theorem}[Cowitness existence --- extended version] \label{CowitnessExistenceExt}
  For every $d, k, r, s \in \Nat$ there is a $c \in \Nat$ such that for every graph $G$ and $S \subseteq G$ with $|S| = s$ such that the $(d, 3r)$-splitter game on $G' := G - S$ is won by splitter, and for every $A \subseteq G$, there is a $(k, r)$-cowitness $Q$ for the pair $(A, G)$ of size at most $c$, satisfying $S \subseteq Q$.
  \end{theorem}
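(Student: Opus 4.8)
The plan is to prove Theorem~\ref{CowitnessExistenceExt} by induction on $d$, keeping $k$, $r$, $s$ arbitrary throughout. The base case $d=0$ is immediate: if Splitter wins the $(0,3r)$-splitter game on $G':=G-S$ then $G-S=\emptyset$, so $V(G)=S$; then $Q:=V(G)=S$ is a $(k,r)$-cowitness for $(A,G)$ --- the whole vertex set always is one, since any short path connecting a vertex of $X$ to $a$ passes through $a\in Q$ --- it has size $s$, and $S\subseteq Q$. So $c:=s$ works.

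For the inductive step assume the statement for $d-1$, and let $G$, $S$ with $|S|=s$, and $A\subseteq G$ be given with Splitter winning the $(d,3r)$-splitter game on $G'=G-S$. Apply the Greedy Lemma (Lemma~\ref{GreedyLemma}) to $(G,\ A\setminus S,\ r,\ k)$. If it returns $Y\subseteq A\setminus S$ with $|Y|>k$ that is $2r$-independent in $G$, then no set $X$ with $|X|=k$ can satisfy $A\subseteq N_r(X)$: the balls $N_r(y)$, $y\in Y$, are pairwise disjoint, so each vertex of $X$ lies in $N_r(y)$ for at most one $y$, and thus $k$ vertices cannot $r$-dominate the more than $k$ vertices of $Y\subseteq A$. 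Hence the cowitness condition is vacuous and $Q:=S$ works. Otherwise the lemma returns $Z\subseteq A\setminus S$ with $|Z|\le k$ and $A\setminus S\subseteq N^{G}_{2r}(Z)$; note $Z\cap S=\emptyset$. This is the main case.

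In the main case, for each $z\in Z$ let $w_z\in N^{G'}_{3r}(z)$ be the vertex removed by Splitter's winning strategy after Connector plays $z$, so that Splitter wins the $(d-1,3r)$-game on $G'[\,N^{G'}_{3r}(z)-\{w_z\}\,]$. Put $H_z:=G[\,N^{G'}_{3r}(z)\cup S\,]$ and $S_z:=S\cup\{w_z\}$. Since $N^{G'}_{3r}(z)$ is disjoint from $S$, we have $H_z-S_z=G'[\,N^{G'}_{3r}(z)-\{w_z\}\,]$ and $|S_z|=s+1$, so the inductive hypothesis applied to $H_z$, $S_z$ and $A_z:=A\cap V(H_z)$ produces a $(k,r)$-cowitness $Q_z$ for $(A_z,H_z)$ of size at most $c(d-1,k,r,s+1)$ with $S_z\subseteq Q_z$. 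Set $Q:=S\cup\bigcup_{z\in Z}Q_z$; then $S\subseteq Q$ and $|Q|\le s+k\cdot c(d-1,k,r,s+1)=:c(d,k,r,s)$, which is finite by induction.

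What remains --- and this is where the proof must genuinely work --- is to verify that $Q$ is a $(k,r)$-cowitness for $(A,G)$. Given $X\subseteq A$ with $|X|=k$ and $A\subseteq N^G_r(X)$ and given $a\in A$, pick $x_0\in X$ with $\dist^G(x_0,a)\le r$ and a shortest $G$-path $P$ from $x_0$ to $a$. If $P$ meets $S$ (in particular if $a\in S$ or $x_0\in S$) we are done since $S\subseteq Q$; otherwise $P\subseteq G'$ and $x_0,a\notin S$, so $x_0\in N^G_{2r}(Z)$ and we may fix $z_0\in Z$ with $\dist^G(x_0,z_0)\le 2r$, whence every vertex of $P$ lies within $G$-distance $3r$ of $z_0$. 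One then wants to relocate $P$ inside $H_{z_0}$ and invoke $Q_{z_0}$, and here exactly two obstacles must be overcome --- this is also where the separator $S$ (with the invariant $S\subseteq Q$) and the radius $3r$ earn their keep. First, the recursion lives on $G'=G-S$, whereas short paths are measured in $G$: a short $G$-path may be long, or even disconnected, in $G'$, so one must argue that it either genuinely meets $S$ (then $S\subseteq Q$ finishes) or can be relocated into a radius-$3r$ ball taken in $G'$. Second, the hypothesis ``$|X|=k$ and $A\subseteq N_r(X)$'' does not descend to the subinstance $(A_{z_0},H_{z_0})$: $X$ is spread over all the balls, and $X\cap V(H_{z_0})$ need neither have size $k$ nor $r$-dominate $A_{z_0}$ in $H_{z_0}$, so $Q_{z_0}$ cannot be applied to $X$ verbatim. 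The natural remedy, for which the profile machinery of Proposition~\ref{ProfileLemma} is evidently tailored, is to argue in terms of $\profile_S(X)$ rather than $X$ itself --- making the reasoning inside each $H_z$ uniform over all $X$ consistent with a fixed profile on the separator --- together with the monotonicity of capturing and a suitable choice of the sets $A_z$; carrying this out is the technical core. The recursion $c(d,k,r,s)=s+k\cdot c(d-1,k,r,s+1)$ then unrolls to a finite constant.
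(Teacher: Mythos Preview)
Your inductive skeleton and your diagnosis of the two obstacles are exactly right, but the concrete construction you give does not support the argument you yourself say is needed, and you explicitly stop short of carrying it out. Two things are genuinely wrong, not just unfinished. First, applying the Greedy Lemma once to $A\setminus S$ in $G$ yields only $A\setminus S\subseteq N^{G}_{2r}(Z)$; since $H_z$ is built from the $G'$-ball $N^{G'}_{3r}(z)$, a $G$-distance bound does not place the short path $P$ inside $H_{z_0}$. Running Greedy in $G'$ instead is necessary, but then the ``large $2r$-independent set'' alternative no longer makes the cowitness condition vacuous: a set can be $2r$-independent in $G'$ while still $r$-dominated in $G$ through $S$. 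The paper resolves this by running Greedy in $G'$ \emph{separately on each profile class} $T_p=\{a\in A:\profile_S(a)=p\}$; the point is that if $(X,a)$ is not captured by $S$ and $a\in T_p$, then any $y\in T_p$ shares $a$'s profile on $S$, so a large $2r$-independent $Y_p\subseteq T_p$ in $G'$ would contain some $y\notin N^G_r(X)$, contradicting $A\subseteq N_r(X)$. This per-profile split is not a convenience --- it is what forces case~(2) of Greedy and hence yields $a\in N^{G'}_{2r}(Z)$.

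Second, with $A_z=A\cap V(H_z)$ the hypothesis $A_z\subseteq N^{H_z}_r(X_z)$ can simply fail (some $a'\in A_z$ may be $r$-close to $X$ only via $S$), so the inductive cowitness $Q_z$ cannot be invoked at all. The fix is not merely ``a suitable choice of $A_z$'' but a profile-indexed family: for each profile $p$ on $S$ one recurses on $A_{z,p}:=A_z\setminus[r-p]$, and for the particular $l:=\profile_S(X)$ one gets $A_{z,l}=A_z\setminus A_z^1$ (the vertices reachable from $X$ through $S$ removed), whence $A_{z,l}\subseteq N^{H_z}_r(X_z)$ and $a\in A_{z,l}$. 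This means one subcall per pair $(z,p)$, with $|Z|\le k(r+2)^s$ and $(r+2)^s$ profiles, so the recursion is $c(d,s)\lesssim k(r+2)^{2s}\,c(d-1,s+1)$; your claimed bound $s+k\cdot c(d-1,s+1)$ is too optimistic and reflects the missing profile layer.
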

  Obviously, Theorem \ref{CowitnessExistence2} follows from Theorem \ref{CowitnessExistenceExt} by considering $s:=0$. The remainder of this section is devoted to the proof of Theorem \ref{CowitnessExistenceExt}.
\bigskip

  Fix $d, r, s, G, A$. We want to find a $(k, r)$-cowitness $Q$ for the pair $(A, G)$.

  We proceed by induction on $d$. The initial step is obvious: $d = 0$ means that $V(G)=S$ and we can take $Q:=V(G)$ and $c=s$.
  
  
  Now we consider the induction step.

\begin{lemma} \label{zeroLemma}
  Take any $v \in G$. Define $w_v$ to be a wining splitter response for the connector play at $v$ in the $(d,3r)$-splitter game and $G_v$ to be $G[N_{3r}^{G'}(v) \cup S]$. Then $(G_v, S \cup \{w_z\}, A^*)$ satisfies the induction hypothesis, for every $A^* \subseteq G_v$.
\end{lemma}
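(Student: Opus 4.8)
The lemma is a bookkeeping step: it asserts that the triple $(G_v, S \cup \{w_v\}, A^*)$ fed into the recursion satisfies the hypotheses of Theorem~\ref{CowitnessExistenceExt} with the parameter $d$ replaced by $d-1$ (and $s$ replaced by $s+1$, while $k$ and $r$ are unchanged). So the plan is simply to check those hypotheses one by one. Note first that $v$ is implicitly a vertex of $G' = G - S$ --- otherwise ``the connector play at $v$'' makes no sense --- and that, since the $(d,3r)$-splitter game on $G'$ is won by Splitter and $d \geqslant 1$ in the induction step, a winning Splitter response $w_v \in N_{3r}^{G'}(v)$ to the opening move $v$ exists (the set $N_{3r}^{G'}(v)$ is nonempty, as it contains $v$).

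First, the separator. By construction $G_v = G[N_{3r}^{G'}(v) \cup S]$, so $S \subseteq V(G_v)$, and also $w_v \in N_{3r}^{G'}(v) \subseteq V(G_v)$. Moreover $N_{3r}^{G'}(v) \subseteq V(G') = V(G) \setminus S$, hence $w_v \notin S$ and $|S \cup \{w_v\}| = s + 1$ --- a legitimate value for the parameter $s$ of Theorem~\ref{CowitnessExistenceExt}. The requirement $A^* \subseteq V(G_v)$ is part of the statement. It remains to verify the splitter-game condition for $G_v - (S \cup \{w_v\})$.

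Unwinding definitions,
\[
G_v - (S \cup \{w_v\}) \;=\; G\bigl[(N_{3r}^{G'}(v) \cup S) \setminus (S \cup \{w_v\})\bigr] \;=\; G\bigl[N_{3r}^{G'}(v) \setminus \{w_v\}\bigr],
\]
where the last equality uses $N_{3r}^{G'}(v) \cap S = \emptyset$. Since $N_{3r}^{G'}(v) \setminus \{w_v\} \subseteq V(G')$ and $G'$ is an induced subgraph of $G$, this graph equals $G'\bigl[N_{3r}^{G'}(v) \setminus \{w_v\}\bigr]$, which is exactly the arena $G'_1$ produced after the first round of the $(d,3r)$-splitter game on $G'$ in which Connector opens with $v$ and Splitter responds with $w_v$. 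Because $w_v$ was chosen as a winning Splitter response, Splitter wins the game continuing from $G'_1$ within $d-1$ rounds; that is, Splitter wins the $(d-1,3r)$-splitter game on $G'_1 = G_v - (S \cup \{w_v\})$. This is the last missing hypothesis, so $(G_v, S \cup \{w_v\}, A^*)$ satisfies the induction hypothesis, as claimed.

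I do not anticipate a genuine obstacle in this particular lemma; the only points requiring a little care are that $G'$ is an induced subgraph of $G$ (so that $3r$-neighborhoods, and hence splitter-game arenas, computed inside $N_{3r}^{G'}(v)$ agree whether evaluated in $G$ or in $G'$) and that $w_v \notin S$, so the separator grows by exactly one while the round budget drops by one --- which is precisely what keeps $s$ bounded along the recursion. The real work lies downstream, in choosing the family of centers $v$ over which to recurse and in assembling the returned cowitnesses (together with the profile / neighborhood-complexity bounds of Theorem~\ref{neigComp}) so that the resulting union stays of size bounded by a constant.
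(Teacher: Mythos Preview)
Your proof is correct and follows the same approach as the paper's; the paper's own argument is the one-line observation that, by definition of the splitter game, Splitter wins the $(d-1,3r)$-splitter game on $G[N_{3r}^{G'}(v)\setminus\{w_v\}] = G_v - (S\cup\{w_v\})$. You simply spell out the details the paper leaves implicit (in particular that $w_v\notin S$, so the separator grows by exactly one, and that inducing on $N_{3r}^{G'}(v)$ gives the same graph in $G$ as in $G'$).
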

\begin{proof}
 By definition of the splitter game, splitter win the $(d-1, 3r)$ splitter game in $G[N_{3r}(v) - \{w_v\}] = G_v - (S \cup \{w_v\})$.
\end{proof}

\begin{lemma} \label{localLemma}
  There is a set $Z$ of size bounded by a function of $k, r, s$, such that:
  For every pair $(X,a)$ with $|X|\leqslant k$, $A \subseteq N_r^G(X)$ and $a \in A$: if $(X, a)$ is not captured by $S$, then $a \in N^{G'}_{2r}(Z)$
\end{lemma}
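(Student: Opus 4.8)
The plan is to obtain $Z$ by applying the Greedy lemma (Lemma \ref{GreedyLemma}) to a carefully chosen set, and then argue that the "independent" alternative cannot occur because it would contradict the hypothesis that $A \subseteq N_r^G(X)$ for some small $X$. First I would set up the right set to feed to the greedy procedure. The natural candidate is the set
\[
 A_0 := \{\, a \in A : (X,a) \text{ is not captured by } S \text{ for some admissible } X \,\},
\]
but since the conclusion quantifies over all pairs $(X,a)$ simultaneously, it is cleaner to work directly with the following set: let $B$ be the set of all $a\in A$ such that the shortest path from $a$ to $X$ avoids $S$ — more precisely, such that $\dist^{G'}(a, X) \le r$ — noting that if $(X,a)$ is \emph{not} captured by $S$ and $a\in N_r^G(X)$, then a shortest $a$–$X$ path of length $\le r$ lies entirely in $G' = G - S$, so $\dist^{G'}(a,X)\le r$. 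Thus it suffices to produce a bounded set $Z$ with $B \subseteq N_{2r}^{G'}(Z)$, uniformly over the choice of $X$.

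Next I would apply Lemma \ref{GreedyLemma} inside the graph $G'$ with radius parameter $r$ and threshold $k$, to the vertex set $B$ (or to all of $A$). The lemma gives one of two outcomes. In outcome (2), we get $Z \subseteq B$ with $|Z| \le k$ and $B \subseteq N_{2r}^{G'}(Z)$ — this is exactly what we want, and $|Z|$ is bounded by $k$, hence by a function of $k,r,s$. So the whole content is to rule out outcome (1): there is no $Y \subseteq B$ with $|Y| > k$ that is $2r$-independent in $G'$ (equivalently, the balls $N_r^{G'}(y)$ for $y\in Y$ are pairwise disjoint). Here is where the hypothesis bites: suppose such a $Y$ existed. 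Each $y \in Y$ lies in $B$, so for the specific $X$ witnessing $y$ (or, if we are doing it uniformly, for \emph{any} admissible $X$ with $A \subseteq N_r^G(X)$), there is $x_y \in X$ with $\dist^{G'}(y, x_y) \le r$, i.e. $x_y \in N_r^{G'}(y)$. Since the balls $N_r^{G'}(y)$ are pairwise disjoint across $y \in Y$, the map $y \mapsto x_y$ is injective, so $|X| \ge |Y| > k$, contradicting $|X| \le k$.

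The one subtlety I would need to be careful about — and I expect this to be the main obstacle — is the order of quantifiers: Lemma \ref{localLemma} asks for a \emph{single} $Z$ that works for \emph{all} pairs $(X,a)$, whereas the contradiction argument above most naturally fixes $X$ first. The fix is that the greedy set $B$ above does not depend on $X$ at all: it is the set of $a \in A$ for which \emph{some} path of length $\le r$ to \emph{some} admissible $X$ avoids $S$; equivalently, since every admissible $X$ must satisfy $A \subseteq N_r^G(X)$, one can simply take $B := \{\,a \in A : \text{every shortest path from } a \text{ to its nearest vertex in some admissible } X \text{ can be taken avoiding } S\,\}$, but the slickest route is to note we only ever need: \emph{if} $(X,a)$ is not captured by $S$ \emph{then} $\dist^{G'}(a,X)\le r$, which lets us take $B := A$ and run the greedy lemma on $A$ in $G'$; outcome (1) then directly yields a $(k{+}1)$-element $2r$-independent-in-$G'$ subset $Y$ of $A$, and plugging \emph{any} admissible $X$ into the injectivity argument gives $|X|\ge k+1$, contradiction — but we must still check at least one admissible $X$ exists, which is exactly the case in which the conclusion is non-vacuous. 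So in the vacuous case the statement holds trivially, and in the non-vacuous case outcome (1) is impossible, forcing outcome (2) and delivering $Z$ of size $\le k$. I would then set $Z$ to be this greedy set (padding its size bound to an explicit function of $k,r,s$ if a uniform statement is wanted), completing the proof.
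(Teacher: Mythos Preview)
Your argument has a genuine gap in the step where you rule out outcome~(1) of the Greedy lemma. You assert that for any admissible $X$ and every $y\in Y$ ``there is $x_y\in X$ with $\dist^{G'}(y,x_y)\le r$'', but this does not follow: from $A\subseteq N_r^G(X)$ you only get $\dist^G(y,X)\le r$, and the witnessing path may pass through $S$, so $\dist^{G'}(y,X)$ can be arbitrary. The implication ``$(X,a)$ not captured by $S\ \Rightarrow\ \dist^{G'}(a,X)\le r$'' is valid for the particular $a$ in the hypothesis, but you are applying it to \emph{every} $y\in Y$, and there is no reason those $y$'s should also fail to be captured by $S$. A concrete counterexample: take $r=2$, $k=1$, $S=\{s\}$, $V(G)=\{a_1,a_2,u,x,s\}$ with edges $a_1u,\,ux,\,a_2s,\,sx$, and $A=\{a_1,a_2\}$. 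Then $A$ is $2r$-independent in $G'=G-s$ (so outcome~(1) holds), yet $X=\{x\}$ satisfies $A\subseteq N_2^G(X)$ and the pair $(\{x\},a_1)$ is not captured by $S$; the lemma is non-vacuous, but your injectivity argument fails at $y=a_2$ since $\dist^{G'}(a_2,x)=\infty$.

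The missing idea, supplied in the paper, is to stratify $A$ by distance profile on $S$: set $T_p:=\{a\in A:\profile_S(a)=p\}$ and apply the Greedy lemma in $G'$ to each $T_p$ separately. The point is that whether $(X,a)$ is captured by $S$ depends only on $\profile_S(a)$ (for fixed $X$), so within a single class $T_p$, if one $a\in T_p$ has $(X,a)$ not captured by $S$ then every $y\in T_p$ does as well, whence $\dist^{G'}(y,X)\le r$ for all $y\in T_p$ and your injectivity argument does go through, ruling out outcome~(1) for that class. Taking $Z:=\bigcup_p Z_p$ over the covered profiles gives $|Z|\le k(r+2)^s$; this is where the dependence on $s$ enters, which your claimed bound of $k$ conspicuously lacks.
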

\begin{proof}
  Fix $G, S, A, r, k$ and $k$ as in the statement. For a profile $p$ on $S$, define
 \[ T_p := \{ x \in A | \profile_S(x) = p \} \] 
  We apply Lemma \ref{GreedyLemma} in the graph $G'$ to $T_p$. If the second case of the lemma holds, say that the profile $p$ is covered, and define $Z_p$ to be the outcome of the application of the lemma. Thus $T_p \subseteq N_{2r}^{G'}(Z_p)$.

\begin{claim}
  If for some vertex $a \in  T_p$ and set $X \subseteq G$ with $|X|\leqslant k$, such that the pair $(X, a)$ is not captured by $S$, then the profile $p$ is covered. \end{claim}
\begin{proof}
  We show that we are always in case $2$ of the lemma. Assume otherwise: there is $Y_p$ such that $|Y_p|>k$ and balls $N_r^{G'}(y)$, for all $y \in Y_p$, are disjoint.
  
  By assumption $|X| \leqslant k$, so for at least one $y \in Y_p$, the ball $N_r^{G'}(y)$ is disjoint with $X$.
  
  We claim that $\dist^G(y, X ) > r$. We already know that this fact holds in $G'$. The only remaining thing to show is the impossibility of connecting $y$ with $X$ by a short path passing through $S$. But this is impossible by the assumption that $S$ does not capture $(X, a)$, because $y$ and $a$ have the same profiles on $S$.
\renewcommand\qedsymbol{$\righthalfcup$}
\end{proof}

  Define $Z$ as $ Z := \bigcup_{p} Z_p$, where sum ranges over all profiles $p$ which are covered.

   We claim that $X$ satisfies the required condition. Indeed, for every $a \in G$ and $X \subseteq G$ with $|X|\leqslant k$ such that $(X, a)$ is not captured by $S$, the $\profile^G_S(a)$ is covered and so $a \in T_p \subseteq N^{G'}_{2r}(Z_r) \subseteq N^{G'}_{2r}(Z)$. 
\end{proof}

\begin{lemma} \label{mainLemma}
 For each $z \in Z$ and profile $p$ on $S$ there is a set $A_{z,p}$ such that the following holds.
 
 For every $(X, a)$ which is not captured by $S$ there is a pair $(z, l)$ such that any cowitness $Q_{z,l}$ for $(G_z, A_{z,l})$ captures $(X, a)$.
\end{lemma}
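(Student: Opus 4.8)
The plan is to prove Lemma~\ref{mainLemma} by using Lemma~\ref{localLemma} to localise $a$ near some $z\in Z$, and then reducing the capturing of $(X,a)$ to a single application of the cowitness property inside the smaller graph $G_z$.

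First I would fix a pair $(X,a)$ with $|X|\leqslant k$, $X\subseteq A$, $A\subseteq N_r^G(X)$, $a\in A$, such that $(X,a)$ is not captured by $S$, and set $p:=\profile_S(X)$. Two observations drive the argument. The first is that, by Proposition~\ref{ProfileLemma}, the set of all vertices $b$ for which $(X,b)$ is captured by $S$ is exactly the trace $[r-p]$; in particular it depends only on $p$, not on $X$ itself. The second is a locality statement: if $(X,b)$ is not captured by $S$ and $x\in X$ satisfies $\dist^G(b,x)\leqslant r$, then a shortest $b$--$x$ path cannot pass through $S$, so it is a path in $G'$; and if moreover $\dist^{G'}(z,b)\leqslant 2r$, then every vertex of this path is at $G'$-distance at most $3r$ from $z$, so the whole path lies inside $G_z=G[N_{3r}^{G'}(z)\cup S]$. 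Using Lemma~\ref{localLemma} to choose $z\in Z$ with $\dist^{G'}(z,a)\leqslant 2r$, and applying the locality observation to $b=a$, we conclude that $a$, some $X$-witness of $a$, and a short path joining them all lie inside $G_z$.

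Next I would set $A_{z,p}$ to be, roughly, $\{\,b\in A : \dist^{G'}(z,b)\leqslant 2r\text{ and }b\notin[r-p]\,\}$ (with a small adjustment discussed below). By the first observation this set depends only on $z$ and $p$; it lies in $N_{2r}^{G'}(z)\subseteq V(G_z)$; and it contains $a$. The locality observation shows that $A_{z,p}\subseteq N_r^{G_z}(X\cap V(G_z))$ with $|X\cap V(G_z)|\leqslant k$. The goal is then to apply the defining property of a cowitness $Q_{z,p}$ for $(A_{z,p},G_z)$ with a covering set $X'$ and with the vertex $a$: this produces a short path, inside $G_z$ and hence inside $G$, from some vertex of $X'$ to $a$ passing through $Q_{z,p}$. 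Such a path witnesses capturing in $G$, so by monotonicity of capturing (together with $X'\subseteq X$ and $Q_{z,p}\subseteq Q$) one gets that $Q$ captures $(X,a)$, which is exactly what the lemma asserts.

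The hard part --- and what forces the precise definition of $A_{z,p}$ --- is that to invoke the cowitness property one needs the covering set $X'$ to be at the same time a subset of $A_{z,p}$ (so that it is a legal choice for a cowitness of $(A_{z,p},G_z)$), a subset of $X$ (so that monotonicity carries the capture back to the original $X$), of size at most $k$, and an actual $r$-cover of $A_{z,p}$ inside $G_z$. The tension is geometric: $A_{z,p}$ must sit in a ball of radius about $2r$ around $z$ so that the covering paths stay inside the radius-$3r$ ball $G_z$, whereas the $X$-witnesses of vertices of $A_{z,p}$ can lie one $r$-layer further out; in addition, some vertices of $X$ may fall inside $[r-p]$ --- this is exactly the situation where $S$ already captures the set $X$ --- and hence out of $A_{z,p}$. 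Overcoming this is where I expect the real work to go: one has to reshape $A_{z,p}$ just enough to contain an appropriate $k$-subset $X'\subseteq X$ covering it, while keeping $A_{z,p}\subseteq V(G_z)$. This is precisely what the choice of radius $3r$ (rather than $2r$) in the splitter game, together with the profile bookkeeping on $S$, is designed to make possible; the auxiliary vertex $w_z$ added to the cowitness in Lemma~\ref{zeroLemma} is available in case a short detour through the separator is needed.
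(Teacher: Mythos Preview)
Your approach is the paper's: the same definition $A_{z,p}=A\cap N_{2r}^{G'}(z)\setminus[r-p]$, the localisation of $a$ via Lemma~\ref{localLemma}, the choice $l=\profile_S(X)$, and the locality argument showing that a short $a$--$X$ path avoiding $S$ stays inside $G_z$. The paper then simply sets $X_z:=X\cap V(G_z)$, verifies (i) $a\in A_{z,l}$ and (ii) $A_{z,l}\subseteq N_r^{G_z}(X_z)$, and invokes the cowitness property of $Q_{z,l}$ directly with the pair $(X_z,a)$; monotonicity then gives capturing of $(X,a)$.

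The obstacle you flag as ``the hard part'' --- forcing the covering set $X'$ to lie inside $A_{z,p}$ --- is not something the paper addresses, and no reshaping of $A_{z,p}$ is performed. You are reading the cowitness definition as requiring the dominating set to be a subset of $A$, but throughout the proofs of Lemmas~\ref{localLemma} and~\ref{mainLemma} the paper takes $X$ to range over all of $V(G)$ (it writes ``$X\subseteq G$'' explicitly in both lemmas), and the cowitness property is applied with $X_z\subseteq V(G_z)$ and $|X_z|\leqslant k$, with no claim that $X_z\subseteq A_{z,l}$. So the intended reading of the definition is that the dominating set lives in $V(G)$, not in $A$; under that reading, items (i) and (ii) are already enough and your proof is complete as it stands. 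The literal ``$X\subseteq A$'' in the stated definition appears to be a slip --- with it, the paper's own argument would not go through either. Finally, the vertex $w_z$ you invoke plays no role in the capturing step; it is only there so that $(G_z,S\cup\{w_z\})$ satisfies the inductive hypothesis on $d$ when one actually constructs $Q_{z,l}$.
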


\begin{proof}[Proof of Lemma \ref{mainLemma} $\so$ Theorem \ref{CowitnessExistenceExt}]
  Define $Q = \bigcup_{z \in Z,p \in \Pl}Q_{z,p}$, where $Q_{z,p}$ is a cowitnesses for the pair $(G_z, A_{z,p})$, obtained by applying the induction hypothesis to the $(G_z, S \cup \{w_z\}, A_{z,p})$.

  Then $|Q|$ is bounded by induction hypothesis.
\end{proof}

  It remains to prove the lemma.

\begin{proof}[Proof of lemma \ref{mainLemma}]
  For $z \in Z$. Define $A_z := A \cap N^{G'}_{2r}(z)$ and let $A_{z,p} := A_z - [r-p]$, where $p$ is a profile on $S$.

  Let $(X, a)$ ($X \subseteq G, a \in A$) be a pair which is not captured by $S$.

  By Lemma \ref{localLemma}, we can find $z \in Z$ such that $a \in N^{G'}(z)$. Let $l := \profile_S(X)$. We show that $z,l$ satisfies the required property.

  Let $X_z = X \cap  V(G_z)$. Let $Q_{z, l}$ be a cowitness for $(G_z, A_{z,l})$.

  We show that $Q_{z,l}$ captures $(X_z, a)$, which implies that $Q_{z,l}$ captures $(X, a)$ since $X_z \subseteq X$.

  As $Q_{z,l}$ is a cowitness for $(G_z, A_{z,l})$, it is enough to show the following properties:
\begin{enumerate}
\item $A_{z,l} \subseteq N_r^{G'}(X_z)$
\item $a \in A_{z,l}$
\end{enumerate}
  To prove these properties we define two subsets of $A_z$:
\begin{itemize}
 \item $A_z^1$ is the set of these vertices in $A_z$ that can be connected to a vertex from $X$ by a short path passing through $S$,
 \item $A_z^2$ is the set of these vertices in $A_z$ that can be connected to a vertex from $X$ by a short path laying inside $G'$.
\end{itemize}
 By $A \subseteq N_r^G(X)$, any vertex from $A_z$ satisfies at least one of these two conditions. The fact that $(X, a)$ is not captured by $S$ can be equivalently expressed as $a \not\in A_z^1$.

 By Proposition \ref{ProfileLemma}, $A_{z,l}= A_z - A^1_z$. As $a \not\in A_z^1$ and $a \in A_z$ then $a \in A_{z,l}$, proving $(2)$.

\begin{claim}
  $A_z^2 \subseteq N^{G_z}_r(X_z)$
\end{claim}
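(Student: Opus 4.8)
The plan is to prove the claim $A_z^2 \subseteq N^{G_z}_r(X_z)$ by taking an arbitrary vertex $b \in A_z^2$, exhibiting a short path $P$ in $G'$ from $b$ to some $x \in X$, and showing that this path — suitably truncated — lives inside $G_z$ and ends at a vertex of $X_z = X \cap V(G_z)$. First I would recall that $b \in A_z = A \cap N^{G'}_{2r}(z)$, so $\dist^{G'}(b,z) \leqslant 2r$, and that by definition $G_z = G[N^{G'}_{3r}(v)\cup S]$ for the Connector vertex $v=z$, i.e. every vertex of $G'$ within $G'$-distance $3r$ of $z$ belongs to $V(G_z)$ (together with all of $S$, but here we work inside $G'$).

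Next I would take the short path $P = b = p_0, p_1, \ldots, p_m = x$ witnessing $b \in A_z^2$: it lies entirely in $G'$, has length $m \leqslant r$, and ends at some $x \in X$. For each $i$ we have $\dist^{G'}(b, p_i) \leqslant i \leqslant r$, hence $\dist^{G'}(z, p_i) \leqslant \dist^{G'}(z,b) + \dist^{G'}(b,p_i) \leqslant 2r + r = 3r$, so $p_i \in N^{G'}_{3r}(z) \subseteq V(G_z)$. In particular the whole path $P$ is contained in $G_z$, and since $P$ uses only edges of $G'$ (hence of $G_z$), it is a path in $G_z$. Its endpoint $x$ satisfies $x \in X$ and $x \in V(G_z)$, so $x \in X_z$. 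Therefore $\dist^{G_z}(b, X_z) \leqslant \dist^{G_z}(b,x) \leqslant m \leqslant r$, i.e. $b \in N^{G_z}_r(X_z)$, which is what we wanted.

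The main obstacle — really the only subtlety — is making sure the distance bookkeeping is done in the right graph: the path $P$ is a $G'$-path, so its intermediate vertices are controlled by $G'$-distances, and one must check that $3r$ (and not, say, $2r$) is exactly the radius needed in the definition of $G_z$ for the triangle-inequality estimate $2r + r$ to go through. This is precisely why the splitter game in the extended theorem is played with radius $3r$ rather than $r$. A secondary point worth stating explicitly is that restricting $X$ to $X_z$ does not lose the endpoint $x$: the argument above shows that whenever $b \in A_z^2$ the relevant $x$ is automatically in $V(G_z)$, so no vertex of $A_z^2$ becomes uncovered by passing from $X$ to $X_z$. Combining this claim with the already-established inclusion $A_z^1 \subseteq$ (vertices captured by $S$) and with property $(2)$ will give $A_{z,l} = A_z - A_z^1 \subseteq A_z^2 \subseteq N^{G_z}_r(X_z)$, i.e. property $(1)$, completing the proof of Lemma~\ref{mainLemma}.
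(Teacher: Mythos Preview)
Your proof is correct and follows exactly the paper's approach: use that $b\in A_z\subseteq N^{G'}_{2r}(z)$ together with the short $G'$-path of length at most $r$ to $x\in X$, and apply the triangle inequality to land inside $N^{G'}_{3r}(z)\subseteq V(G_z)$. The paper's version is terser—it only explicitly verifies that the endpoint $x$ lies in $G_z$ (hence in $X_z$) and leaves implicit that the same $2r+r\le 3r$ estimate places every intermediate vertex of the path in $G_z$; your write-up spells this out, which is a genuine improvement in clarity but not a different argument.
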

\begin{proof}
  Take $v \in A_z^2$ and fix a short path lying in $G'$ that connects vertex from $A$ with vertex $x \in X$. Since $A_z \subseteq N_{2r}^{G'}(z)$ and $\dist^{G'}(v, x) \leqslant r$ then $x \in N_{3r}^{G'}(z)$. So $x \in X_z$.
\renewcommand\qedsymbol{$\righthalfcup$}
\end{proof}
  In particular, $A_{z,l} = A_z - A_z^1 \subseteq A_z^2 \subseteq N^{G'}_r(X_z)$ proving $(1)$.

  By $(1)$ and $(2)$ and the fact that $Q_{z,p}$ is a cowitness for $(G_z, A_{z,l})$, we have that $Q_{z,l}$ captures $(X_z, a)$, hence also $(X, a)$.
\renewcommand\qedsymbol{$\blacksquare$}
\end{proof}

\begin{figure}[htbp]
  \centering
  \def\svgwidth{\columnwidth}
    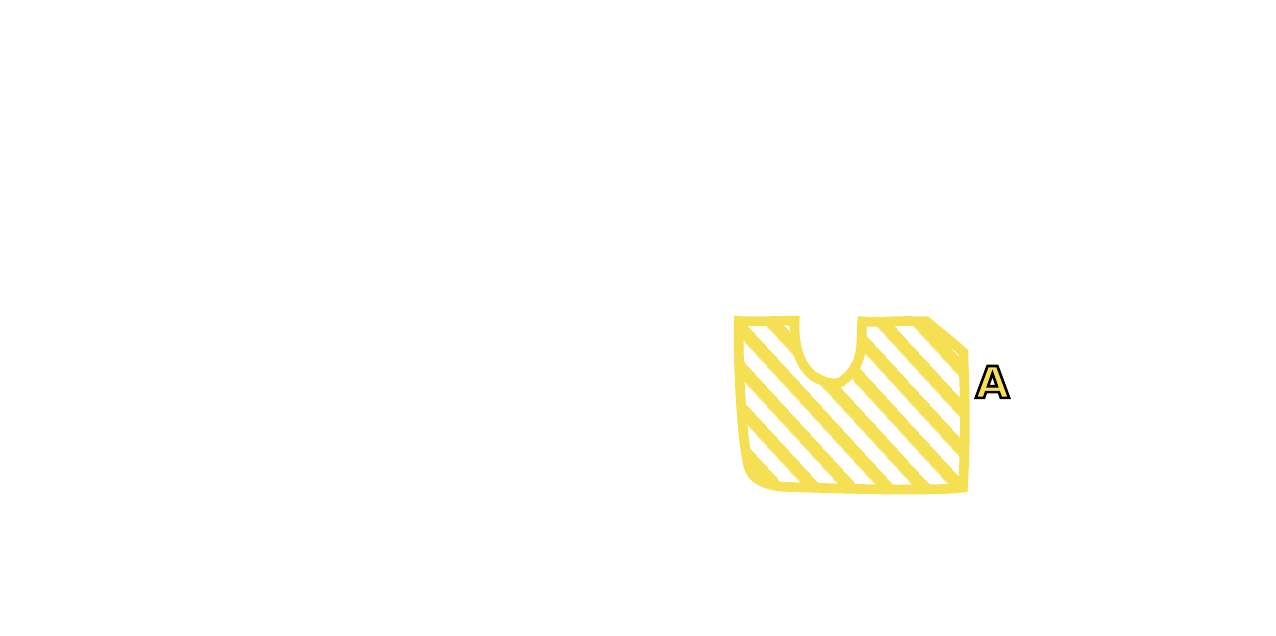
  \caption{Left: $G_t$, two vertices from $X$ with $r$-neighborhoods (taken in $G'$) and two short paths. Note that a short path passing through $S$ connects two vertices being apart in $G'$. Right: visualization of a profile and $A_{z,p}$}
\end{figure}

\begin{corollary} \label{CowitnessComp}
  In Theorem \ref{CowitnessExistence} we can bound size of the cowitness $Q$ by $|Q| \leqslant (rk)^{d(1+o(1))}$ for some constant $d$ depending only on class $\Cl$ and $r$. Moreover, a cowitness of that size can be computed in time $O(|G||Q|)$.
\end{corollary}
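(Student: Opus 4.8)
The plan is to derive both bounds by unwinding the recursion that proves Theorem~\ref{CowitnessExistenceExt}. First I would fix $d$ to be the number of rounds in which Splitter wins the $(l,3r)$-splitter game on graphs of $\Cl$; by Theorem~\ref{SplitterGame} this is a constant depending only on $\Cl$ and $r$, and the linear-time splitter strategy $\Al$ supplies a winning move in linear time. Running the construction from $S=\emptyset$, the recursion tree has depth $d$, and since each recursive call enlarges the separator by exactly one vertex ($w_z$), the separator at depth $i$ has size $i\le d$; every node thus operates on an induced subgraph of $G$ with a separator of size at most $d$.

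For the size bound I would track the branching factor. At a node with separator $S$ of size $i$, the set $Z$ from Lemma~\ref{localLemma} has size at most $k\cdot|\Pl(S)| = k(r+2)^{i}$ (one call to the Greedy Lemma per profile on $S$), and the recursion branches over all pairs $(z,p)$ with $z\in Z$ and $p$ a profile on $S$, i.e.\ into at most $k(r+2)^{2i}\le k(r+2)^{2d}$ children. Hence the tree has at most $\bigl(k(r+2)^{2d}\bigr)^{d}$ leaves, and since the constructed cowitness is the union of the separators sitting at the leaves, each of size at most $d$, we get $|Q|\le d\bigl(k(r+2)^{2d}\bigr)^{d}$. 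As $d$ --- and therefore the factors $d$ and $(r+2)^{2d^{2}}$ --- depends only on $\Cl$ and $r$, this simplifies to $k^{d(1+o(1))}\le(rk)^{d(1+o(1))}$ as $k\to\infty$.

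For the running time I would implement the recursion verbatim and bound the work at a node handling $(G'',S,A'')$, where $G''$ is an induced subgraph of $G$. All auxiliary objects are produced in time $O(|G''|)$ up to a factor polynomial in $k$ and $r$: compute $\profile_S(v)$ for $v\in G''$ by bounded-depth BFS from each vertex of $S$; bucket $A''$ by profile; run the Greedy Lemma on each bucket to assemble $Z$; for each $z\in Z$ read off $w_z$ from $\Al$, form $G_z=G''[N^{G''-S}_{3r}(z)\cup S]$ by BFS, and for each profile $p$ form $A_{z,p}=(A''\cap N^{G''-S}_{2r}(z))\setminus[r-p]$ by BFS; then recurse on each $(G_z,S\cup\{w_z\},A_{z,p})$. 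The number of nodes is $O\bigl(\bigl(k(r+2)^{2d}\bigr)^{d}\bigr)$, i.e.\ $O(|Q|)$ in the sense just established, and each node costs $O(|G''|)\le O(|G|)$ up to $\mathrm{poly}(k,r)$; multiplying, the total running time is $O(|G|\cdot|Q|)$ once the $\mathrm{poly}(k,r)$ factors are absorbed into the bound $(rk)^{d(1+o(1))}$.

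I expect the only genuine obstacle to be bookkeeping: confirming that the recursion really does unwind with the claimed per-level branching and separator growth (so the leaf count is $\bigl(k(r+2)^{2d}\bigr)^{d}$ and nothing worse), and checking that each object touched inside a recursive call --- the profiles, the set $Z$, the graphs $G_z$, the sets $A_{z,p}$, and the splitter responses --- is computable in time linear in the current graph, so that the per-node costs telescope over the tree. No new idea is needed beyond those already present in the proof of Theorem~\ref{CowitnessExistenceExt}.
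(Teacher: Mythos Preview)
Your proposal is correct and follows essentially the same approach as the paper's own proof: both unwind the recursion of Theorem~\ref{CowitnessExistenceExt}, bound the per-level branching by roughly $k(r+2)^{2s}$ (from $|Z|\le k(r+2)^{s}$ times $(r+2)^{s}$ profiles), multiply over the $d$ levels of the splitter game to bound the number of leaves, observe that the output is the union of the leaf separators (each of size at most $d$), and then argue linear work per node for the running time. The only cosmetic difference is that the paper writes down the recurrence $c(r,k,d,s)\le (k+1)(r+1)^{2s}\,c(r,k,d-1,s+1)$ and solves it exactly to $d(k+1)^{d}(r+1)^{d(d-1)}$, whereas you upper-bound the branching uniformly by $k(r+2)^{2d}$ at every level; both yield the stated $(rk)^{d(1+o(1))}$ once the $r$-dependent factors are absorbed into the constant $d=d(\Cl,r)$.
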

\begin{proof}
  Let $c(r,k,d,s)$ be the smallest number for which Theorem \ref{CowitnessExistenceExt} holds.
  
  From the proof given above we have that:
\[ c(r,k,d,s) \leqslant c(r,k,d-1,s+1)|Z|(r+1)^s = (k+1)(r+1)^{2s} c(r,k,d-1,s+1); \]
\[ c(r,k,0,s) \leqslant s. \]
  This gives upper bound:
\[ c(r,k,d,s) \leqslant (k+1)^{d}((r+1)^{2})^{s+(s+1)+\cdots (s+d-1) }c(r,k,0,s+d) = (d+s)(k+1)^d(r+1)^{2sd+d(d-1)} \]
 Which implies that:
\[ c(r,k,d,0) \leqslant d(k+1)^d(r+1)^{d(d-1)} \] 
  The cowitness can be computed: the provided proof is not only constructive, but effectively computable by a recursive algorithm:
\begin{itemize}
\item Compute $Z$ using computability of greedy lemma --- this phase is done in time $O(k|G|)$
\item Compute pairs $(G_z, A_{z,p})$ --- each such pair can be found in linear time
\item For each each pair above we make a recursive subcall.
\end{itemize}
  All steps are done in time (not counting execution of recursive subcalls) $|G|(R+k)$ where $R$ is a number of recursive subcalls. Always $k \leqslant R$, so running time is bounded by $|Q||G|$ (the fraction is the number of leaves of recursion). 
\end{proof}

 \subsection{Cowitness is already a witness}
 \label{CowitnessIsWitnessProof}
  
 We repeat statement for convenience:
\cowitnessIsWitnessR*
\begin{proof}
 Assume otherwise --- that $Q$ is a cowitness, but not a witness. Then there is $X \subseteq G$ with $|X|=k$ such that $Q$ does not captures $X$. From all such choices of $X$, fix one which minimizes the function:
  \[ f(X) = \left| \{ w \in X | \exists v \in X: v \not= w \land \dist(v,w) \leqslant r \} \right|. \]
 
 There is no a $r$-independent set of size $k$ in $G$, so there are $w,v \in X$ such that $w \not= v$ and $\dist(w,v) \leqslant r$.
 
 If $N_r(X - \{w\}) = V(G)$ then, by applying the definition of cowitness to the pair $(X - \{w\}, w)$, we get that $Q$ captures $X$, a contradiction.
 
 Assume otherwise, that $N_r(X - \{w\}) \subsetneq V(G)$. Take any $w' \not\in N_r(X - \{w\})$. Define $X' := X \cup \{w'\} - \{w\}$. We claim that $X'$ is a smaller counterexample:
 \begin{claim} $X'$ is not captured by $Q$.
 \end{claim}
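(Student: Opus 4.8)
The plan is to prove the claim by contradiction through a short case analysis on the endpoints of a witnessing short path.

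Suppose $X'$ is captured by $Q$. By definition there is a short path $P$ (length at most $r$) connecting two distinct vertices $a,b \in X'$ and passing through some vertex of $Q$. Since $X' = (X \setminus \{w\}) \cup \{w'\}$, each of $a,b$ either lies in $X \setminus \{w\}$ or equals $w'$, and they cannot both equal $w'$, so there are only two cases. \textbf{Case 1:} $a,b \in X \setminus \{w\}$. Then $a,b$ are distinct vertices of $X$, and $P$ is a short path through $Q$ connecting them, so $Q$ captures $X$ (this is just monotonicity of capturing applied to $X \setminus \{w\} \subseteq X$), contradicting the choice of $X$. \textbf{Case 2:} one endpoint, say $a$, equals $w'$ and the other is some $b \in X \setminus \{w\}$. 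Then $\dist(w',b) \leq |P| \leq r$, so $w' \in N_r(b) \subseteq N_r(X \setminus \{w\})$, contradicting the choice $w' \notin N_r(X \setminus \{w\})$. In either case we reach a contradiction, so $X'$ is not captured by $Q$.

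Alongside the claim I would record two routine sanity checks. First, $|X'| = k$: since $\dist(v,w)\leq r$ and $v \in X\setminus\{w\}$ we have $w \in N_r(X\setminus\{w\})$, hence $w' \neq w$; and $w' \notin N_r(X\setminus\{w\}) \supseteq X\setminus\{w\}$, so $w'$ is not already in $X\setminus\{w\}$; therefore $|X'| = |X\setminus\{w\}| + 1 = k$. Second, this claim is precisely the ingredient the outer argument needs: combined with the (easy) inequality $f(X') < f(X)$ — which holds because $w'$ is at distance greater than $r$ from every vertex of $X'\setminus\{w'\}$, so it contributes nothing to $f(X')$, while deleting $w$ from $X$ can only destroy near-pairs and never create new ones — it exhibits a counterexample strictly smaller in the measure $f$, contradicting minimality of $f(X)$ and completing the proof of Theorem \ref{CowitnessIsWitness}.

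I do not expect a genuine obstacle here; the only care needed is in bookkeeping the two cases (in particular observing the endpoints cannot coincide) and in verifying that $X'$ still has size exactly $k$ so that it is a legitimate competitor in the minimization defining $f$.
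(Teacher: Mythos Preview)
Your proof is correct and follows essentially the same approach as the paper: both argue by contradiction, use the fact that $w' \notin N_r(X \setminus \{w\})$ to rule out $w'$ as an endpoint of the witnessing short path, and conclude that both endpoints lie in $X$, contradicting that $Q$ does not capture $X$. The paper compresses your two cases into a single line (immediately excluding $w'$ and deducing both endpoints are in $X$), whereas you spell out the case split explicitly; the extra bookkeeping you add on $|X'|=k$ and $f(X')<f(X)$ is not part of the claim itself but is accurate and matches the surrounding argument.
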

 \begin{proof} 
 Assume it is: then there is a short path between $ x_1,x_2 \in X', x_1 \neq x_2$ going through $Q$. As $w' \not\in N_r(X' - \{w\})$, we have $x_1, x_2 \neq w$. The considered path connects two vertices from $X$, so $Q$ captures $X$, a contradiction.
\renewcommand\qedsymbol{$\righthalfcup$}
\end{proof}
 
 It only remains to show that $f(X')<f(X)$ --- by construction $f(X') = f(X - \{w\}) < f(X)$. This finishes the proof.
 \end{proof}

  The computational version of this problem can be solved efficiently:
\begin{theorem} \label{CowitnessIsWitnessComp}
  Given is graph $G$, parameters $k,r$, set $Q$ promised to be $(k-1, r)$-cowitness for pair $(V(G), G)$, and a set $X$ promised not to be captured by $Q$. Then an $r$-independent set of size $k$ in $G$ can be computed in time $O(k^2|G|)$.
\end{theorem}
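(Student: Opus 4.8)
The plan is to turn the constructive proof of Theorem~\ref{CowitnessIsWitness} into an iterated-replacement procedure. We maintain a set $X_i \subseteq V(G)$ with $|X_i| = k$ subject to the invariant that $Q$ does not capture $X_i$, starting from $X_0 := X$ (which satisfies the invariant by the promise). As long as $X_i$ is not $r$-independent we perform one replacement step that strictly decreases the potential $f$ from the proof of Theorem~\ref{CowitnessIsWitness}, namely the number of vertices of $X_i$ lying within distance $r$ of another vertex of $X_i$. Since $0 \leqslant f(X_i) \leqslant |X_i| = k$ and $f$ drops by at least one each step, after at most $k$ steps we reach an $X_i$ with $f(X_i) = 0$, i.e.\ an $r$-independent set of size $k$, which we return. (If the given $X$ is already $r$-independent we return it immediately.)

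A single replacement step on the current set $X_i$ proceeds as follows. First run a breadth-first search from each vertex of $X_i$, each truncated at depth $r$; this either reports a close pair $w \neq v$ in $X_i$ with $\dist(v,w) \leqslant r$, or certifies that no such pair exists, in which case $X_i$ is $r$-independent and we halt. Given a close pair, fix such a $w$ and compute $N_r^G(X_i \setminus \{w\})$ by one multi-source BFS from $X_i \setminus \{w\}$ truncated at depth $r$. By the Case~1 argument in the proof of Theorem~\ref{CowitnessIsWitness} this set is a proper subset of $V(G)$: otherwise the $(k-1,r)$-cowitness property of $Q$, applied to $X_i \setminus \{w\}$ (of size $k-1$, satisfying $V(G) = N_r(X_i \setminus \{w\})$) and to the vertex $w$ (which is within distance $r$ of $v \in X_i \setminus \{w\}$), produces a short $Q$-path between two distinct vertices of $X_i$, contradicting the invariant. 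Hence we may pick any $w' \notin N_r^G(X_i \setminus \{w\})$ by scanning the vertices not reached above, and set $X_{i+1} := (X_i \setminus \{w\}) \cup \{w'\}$.

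Correctness of the step is exactly what is verified in the proof of Theorem~\ref{CowitnessIsWitness}: the choice $w' \notin N_r(X_i \setminus \{w\})$ forces $w' \notin X_i$, so $|X_{i+1}| = k$, and it prevents $w'$ from being an endpoint of any short path to $X_i \setminus \{w\}$, so any short $Q$-path through two distinct vertices of $X_{i+1}$ would already witness that $Q$ captures $X_i$; thus the invariant is preserved. For the potential, $w'$ is at distance more than $r$ from all of $X_i \setminus \{w\}$, so $f(X_{i+1}) = f(X_i \setminus \{w\})$; and deleting $w$ can only remove vertices from the count, while it removes at least $w$ itself (which is counted in $f(X_i)$ because $\dist(w,v) \leqslant r$), so $f(X_{i+1}) \leqslant f(X_i) - 1$.

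For the running time, there are at most $k+1$ iterations (at most $k$ replacements and one final certification). Each iteration runs at most $|X_i| \leqslant k$ truncated BFS passes and $O(1)$ additional BFS/scan passes over a structure of total size $|G|$, hence $O(k|G|)$ per iteration, and $O(k^2|G|)$ overall. There is no deep obstacle here beyond what Theorem~\ref{CowitnessIsWitness} already provides; the one point that must be handled with care is the guarantee that the replacement vertex $w'$ always exists, which is precisely where the cowitness hypothesis enters, via the maintained invariant.
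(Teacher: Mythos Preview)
Your proof is correct and follows essentially the same approach as the paper: extract the replacement step from the proof of Theorem~\ref{CowitnessIsWitness}, iterate it at most $k$ times, and observe that each step costs $O(k|G|)$ via a handful of truncated BFS passes. You spell out the implementation and the potential-decrease argument in more detail than the paper does, but the underlying idea is identical.
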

\begin{proof}
  Such algorithm can be easily extracted from proof of Theorem \ref{CowitnessIsWitness} --- it describes a constructive procedure which from any $X$ not captured by $Q$ and not being an $r$-independent set (this is equivalent to $f(X) > 0$) generates $X'$ such that $|X'|=|X|=k, f(X') < f(X)$ and $Q$ does not capture $X'$. This procedure computes $X'$ in time $O(k|G|)$.
  
  Iterating it at most $k$ times we get $X'$ such that $f(X')=0$. This means that $X'$ is an $r$-independent set.
\end{proof}

 \subsection{Existence of a small cowitness implies nowhere denseness}
 \label{CowitnessToNoDense}
 We now prove implication $(3) \so (1)$ from Theorem \ref{Cowitness}:
\begin{theorem}
 Fix a class $\Cl$ of graphs closed under taking subgraphs. Assume that for every $r$ there is a constant $c$ such that for every $G \in \Cl$ there is a $(1, r)$-cowitness for the pair $(V(G), G)$ of size at most $c$. Then $\Cl$ is nowhere dense.
\end{theorem}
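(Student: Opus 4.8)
The plan is to prove the contrapositive: supposing $\Cl$ is not nowhere dense, I will produce, for a suitable $r$, members of $\Cl$ admitting no $(1,r)$-cowitness of bounded size. The first step is a reduction to highly structured graphs. Since $\Cl$ is closed under subgraphs and not nowhere dense, the standard theory of shallow minors (see \cite{DBLP:books/daglib/0030491}) yields a fixed $p\in\Nat$ such that for every $n$ the \emph{exact $p$-subdivision of $K_n$} belongs to $\Cl$; denote this graph $K^{(p)}_n$ --- it is obtained from $K_n$ by replacing every edge by an internally disjoint path with $p$ internal vertices. (Concretely: not-nowhere-denseness gives an $r_0$ with $K_t\preceq_{r_0}G$ for unboundedly large $t$ and some $G\in\Cl$; a shallow minor model of a large clique contains, as a subgraph, a subdivision of a smaller but still unbounded clique in which every branch path has length at most $2r_0+1$; colouring the edges of that clique by the lengths of their branch paths and applying Ramsey's theorem isolates a subclique all of whose branch paths have a common length $p+1$, and deleting the vertices no longer used keeps us inside $\Cl$ by subgraph-closedness.) From now on $p$ is fixed, hence depends only on $\Cl$.

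Second, I fix the radius. Write $b_1,\dots,b_n$ for the branch vertices of $K^{(p)}_n$. Then $\dist(b_i,b_j)=p+1$ for all $i\neq j$, and every internal vertex of every subdivision path lies within distance $2p+1$ of every $b_i$; hence each $b_i$ has eccentricity at most $R:=2p+1$, i.e.\ $V(K^{(p)}_n)\subseteq N_R(\{b_i\})$. Therefore every singleton $\{b_i\}$ is an admissible choice of $X$ in the definition of a $(1,R)$-cowitness. Applying the hypothesis with $r=R$, there is a constant $c=c(R)$ --- depending only on $p$, hence only on $\Cl$ --- such that for every $n$ the graph $K^{(p)}_n$ has a $(1,R)$-cowitness $Q$ for $(V(K^{(p)}_n),K^{(p)}_n)$ with $|Q|\leqslant c$.

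Third comes the combinatorial core, which is short. The key observation is that the unique path between $b_i$ and $b_j$ of length at most $R$ is the direct subdivision path $S_{ij}$ (which has length $p+1\leqslant R$): any other $b_i$--$b_j$ path must leave $b_i$ along some subdivision path $S_{ik}$ with $k\neq j$ and --- since the internal vertices of $S_{ik}$ have degree $2$ --- is forced to run all the way to $b_k$, after which reaching $b_j$ takes at least $p+1$ more steps, for a total length of at least $2(p+1)>R$. Consequently, applying the cowitness condition to $X=\{b_i\}$ and $a=b_j$ for each ordered pair $i\neq j$ forces $S_{ij}$ to pass through $Q$; that is, $b_i\in Q$, or $b_j\in Q$, or $Q$ meets the set of $p$ internal vertices of $S_{ij}$. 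Put $D=\{i:b_i\in Q\}$, so $|D|\leqslant c$. For every pair $\{i,j\}\subseteq[n]\setminus D$ the third alternative must hold, and since the internal vertices of distinct subdivision paths are distinct this exhibits $\binom{n-|D|}{2}\geqslant\binom{n-c}{2}$ distinct vertices inside $Q$. Thus $\binom{n-c}{2}\leqslant|Q|\leqslant c$, which is false for all sufficiently large $n$ (say $n=2c+3$). Hence $K^{(p)}_{2c+3}\notin\Cl$, contradicting the first step, and $\Cl$ must be nowhere dense.

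The step I expect to be the real obstacle is the reduction in the first paragraph: once we are handed genuine subdivided cliques, everything else is a one-paragraph pigeonhole. The reduction seems essential because a shallow \emph{minor} model of $K_t$ gives no control over the inner structure of the branch sets --- they may be single vertices or arbitrary blobs of bounded radius --- so there is no analogue of ``the direct path is the only short path'', and the counting collapses; subdividing restores exactly the rigidity (degree-two internal vertices) that the argument exploits. If one wants the implication to be fully self-contained rather than quoting \cite{DBLP:books/daglib/0030491}, the bracketed sketch above can be expanded into a standard but slightly tedious computation with grads together with one application of Ramsey's theorem.
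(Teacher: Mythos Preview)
Your argument is correct and follows the same overall strategy as the paper: invoke the characterisation that a subgraph-closed somewhere dense class contains all $p$-subdivisions of cliques for some fixed $p$, and then show directly that such subdivided cliques cannot have a bounded $(1,r)$-cowitness. The combinatorial core differs in an instructive way. The paper chooses the radius to be exactly the eccentricity $p+1+\lceil p/2\rceil$ of a branch vertex, fixes \emph{one} branch vertex $v\notin Q$ as the set $X$, and takes as target $a$ the \emph{midpoint} of a subdivided edge $S_{ab}$ with $a,b\neq v$; the short $v$--$a$ paths are then confined to the three edges $va,vb,ab$, and pairing up the remaining $n-1$ branch vertices gives $\lfloor(n-1)/2\rfloor$ disjoint such triples, hence a linear lower bound on $|Q|$. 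You instead take the slightly larger radius $2p+1$, let both $X=\{b_i\}$ and the target $a=b_j$ be branch vertices, and use that $S_{ij}$ is the \emph{unique} short $b_i$--$b_j$ path; after deleting the at most $c$ branch vertices in $Q$ you obtain $\binom{n-c}{2}$ internally disjoint paths each meeting $Q$, a quadratic lower bound. Your version is arguably cleaner and yields a sharper quantitative conclusion; the paper's version uses a tighter radius but a weaker count. One tiny caveat: when $p=0$ (i.e.\ $\Cl$ contains all $K_n$) there are no internal vertices, so your displayed inequality $\binom{n-c}{2}\leqslant c$ becomes vacuous; the fix is immediate, since then every pair forces an endpoint into $Q$, i.e.\ $Q$ is a vertex cover of $K_n$ and $|Q|\geqslant n-1$.
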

 
 We use following characterization of somewhere denseness (origins of this statement are not clear, see Claim 6.1 with discussion from \cite{DBLP:conf/stacs/DrangeDFKLPPRVS16}):
\begin{definition}
  A $r$-subdivision of a graph $G$ is a graph constructed by replacing edges from $G$ by paths of length $r+1$ (for every edge in $G$, new $r1$ vertices are added in the middle of the edge).
\end{definition}
\begin{theorem}
  Let $\Cl$ be closed under taking subgraphs. Then $\Cl$ is somewhere dense if and only if there is $r$ such that $\Cl$ contains the $r$-subdivision of every simple graph.
\end{theorem}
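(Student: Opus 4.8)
The plan is to prove the two implications separately.

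\emph{Subdivisions $\Rightarrow$ somewhere dense} (the easy direction). Suppose there is $r$ such that $\Cl$ contains the $r$-subdivision $H^{(r)}$ of every simple graph $H$. I would check that $K_t \preceq_r K_t^{(r)}$ for every $t$: as the branch set of a vertex $v$ take $v$ together with, for each subdivision path incident to $v$, the ``half'' of that path nearer to $v$ (split each length-$(r+1)$ path into two pieces of length $\le\lceil r/2\rceil$, assigning each piece to its nearer endpoint), and realize each edge $vw$ of $K_t$ by the single edge joining the two halves of the $vw$-path. The branch sets are pairwise disjoint and connected with radius $\le\lceil r/2\rceil\le r$ (the case $r=0$ being trivial, since $K_t^{(0)}=K_t$), so $K_t\preceq_r K_t^{(r)}\in\Cl$ for all $t$, hence $\Cl$ is somewhere dense.

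\emph{Somewhere dense $\Rightarrow$ subdivisions} (the hard direction). By definition of nowhere denseness there is $r_0$ such that for every $t$ some $G_t\in\Cl$ satisfies $K_t\preceq_{r_0}G_t$. First I would reduce the goal: since $\Cl$ is subgraph-closed and every simple graph on $n$ vertices is a subgraph of $K_n$ (so its $q$-subdivision is a subgraph of $K_n^{(q)}$), it suffices to exhibit one $q$ with $K_n^{(q)}\in\Cl$ for all $n$; and since $K_n^{(q)}\subseteq K_{n'}^{(q)}$ whenever $n\le n'$ and only the finitely many values $q\in\{0,\dots,2r_0\}$ will arise below, it is in turn enough to prove: for every $n$ there are $q_n\le 2r_0$ and $G\in\Cl$ with $K_n^{(q_n)}\subseteq G$ (then by subgraph-closure and infinite pigeonhole some $q^{*}$ recurs for arbitrarily large $n$ and works for all $n$, and hence $\Cl$ contains the $q^{*}$-subdivision of every simple graph).

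To prove that last statement, fix $n$, take $t$ enormous, set $G:=G_t$ with a depth-$r_0$ model $(B_v)_{v\in[t]}$ of $K_t$ (disjoint connected branch sets of radius $\le r_0$, with for each pair a connecting edge $e_{vw}$ having endpoints $x_{vw}\in B_v$ and $x_{wv}\in B_w$), and fix a BFS tree $T_v$ of $B_v$ rooted at a center $c_v$, of depth $\le r_0$. The tool is a tree-branching lemma: a rooted tree of depth $\le h$ carrying $N$ marks (repetition allowed) has a node $b$ and a set of marks of size a fixed positive power of $N$ that either all equal $b$ or lie in pairwise distinct child-subtrees of $b$, so that the tree-paths from $b$ to them are internally disjoint and of length $\le h$. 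Applying this inside each $B_v$ to the marks $\{x_{vw}:w\ne v\}$ yields a center $b_v\in B_v$ and a large set $S_v\subseteq[t]\setminus\{v\}$ admitting internally disjoint short paths in $B_v$ from $b_v$ to $\{x_{vw}:w\in S_v\}$. If I can find $W'\subseteq[t]$ with $W'\setminus\{v\}\subseteq S_v$ for every $v\in W'$, then for each pair $\{v,w\}\subseteq W'$ the concatenation of the $B_v$-path from $b_v$ to $x_{vw}$, the edge $e_{vw}$, and the $B_w$-path from $x_{wv}$ to $b_w$ is a path $P_{vw}$ of length $\le 2r_0+1$; these are pairwise internally disjoint (interiors lie in pairwise disjoint branch sets, and two paths through a common $B_v$ fan out disjointly from their shared endpoint $b_v$) and avoid all $b_u$ internally, so $G$ contains, as a subgraph, a ``mixed subdivision'' of $K_{|W'|}$ with all path lengths in $\{1,\dots,2r_0+1\}$. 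A Ramsey step, colouring pairs by the length of $P_{vw}$, then extracts $W''\subseteq W'$ on which all path lengths equal a single $\ell\le 2r_0+1$, i.e.\ $K_{|W''|}^{(\ell-1)}\subseteq G$ with $q_n:=\ell-1\le 2r_0$; and $|W''|\to\infty$ as $t\to\infty$, so $|W''|\ge n$ once $t$ is large enough.

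I expect the main obstacle to be exactly the step ``find $W'$ with $W'\setminus\{v\}\subseteq S_v$ for every $v\in W'$'', i.e.\ passing to a sub-clique on which the branching choices are \emph{mutually} compatible. The bounds obtained only say that the digraph on $[t]$ with arc $v\to w$ iff $w\in S_v$ has large out-degree, and a digraph of large out-degree need not contain any clique on which all arcs are bidirected (a cyclic-interval configuration defeats the naive counting argument). Overcoming this requires selecting the centers $b_v$ and the sets $S_v$ coherently rather than independently --- for instance by a recursive extraction that adds vertices one at a time from a shrinking pool of vertices already compatible with all previous choices --- and this coordination is the technical heart of the argument. The equivalence itself is classical: it is essentially the grad/topological-grad comparison of Nešetřil and Ossona de Mendez underlying the discussion around Claim 6.1 of \cite{DBLP:conf/stacs/DrangeDFKLPPRVS16}, and in this paper it is invoked as a black box.
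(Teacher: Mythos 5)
You should first note that the paper does not prove this statement at all: it is imported as a known characterization of somewhere denseness, with a pointer to the discussion around Claim~6.1 of \cite{DBLP:conf/stacs/DrangeDFKLPPRVS16} (ultimately the shallow-minor versus shallow-topological-minor comparison of Ne\v{s}et\v{r}il and Ossona de Mendez), so there is no in-paper proof to match and your attempt has to stand on its own. Parts of it do: the easy direction is correct (splitting each subdivision path into two halves gives a depth-$\lceil r/2\rceil$ model of $K_t$ in the $r$-subdivision of $K_t$, hence $K_t\preceq_r G$ for some $G\in\Cl$ and all $t$), and so is the reduction of the hard direction, via subgraph-closure and pigeonhole over the finitely many values $q\le 2r_0$, to exhibiting for each $n$ some $q_n\le 2r_0$ with $K_n^{(q_n)}$ a subgraph of a member of $\Cl$.

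The hard direction, however, is not a proof, and the gap is exactly the step you flag yourself: producing $W'$ with $w\in S_v$ for all distinct $v,w\in W'$. That step \emph{is} the theorem --- it is the whole content of turning a shallow clique minor into a shallow topological clique --- and you leave it as an expectation rather than an argument. The obstruction is genuine: your branching lemma only gives $|S_v|\gtrsim t^{1/(r_0+1)}$, which is badly sublinear in $t$, so no Tur\'an-type or counting argument on the digraph $v\to w$ iff $w\in S_v$ can force even one bidirected pair, as your own cyclic-interval example shows. The hinted fix (``recursive extraction from a shrinking pool'') does not obviously close it either: if at a later vertex $v_i$ you re-apply the branching lemma to marks indexed by the current pool, nothing forces the connection points $x_{v_i v_j}$ towards the \emph{previously} chosen vertices $v_j$, $j<i$, to land in the selected child subtrees of $b_{v_i}$, so internal disjointness inside $B_{v_i}$ for those earlier pairs is not guaranteed; making the choices of centers, subtrees and connection points mutually coherent is precisely the technical work in the known proofs of the grad/topological-grad comparison. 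As it stands, the somewhere-dense~$\Rightarrow$~subdivisions implication is a plan with its central lemma missing; for the purposes of this paper, citing the result (as the paper does) is the appropriate course, or else that coordination argument must be supplied in full.
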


\begin{proof}[Proof of Theorem \ref{CowitnessToNoDense}]
  It is sufficient to prove that the $r$-subdivision of clique $K_n$ does not have $(1, r+1+\ceil{r/2})$-cowitness for pair $(V(G), G)$ of size smaller than $\floor{(n-1)/2}$. Assume there is such a cowitness $Q$.
 
  Define $D$ to be the set of all vertices of degree $n-1$, then $|D|=n$. By $|Q|\leqslant \floor{(n-1)/2}$ we know that there is $v \in D$ such that $D \not\in Q$. Take $X := \{v\}$. Then $N_{r+1+\ceil{r/2}}(X) = V(G)$ and $r+1+\ceil{r/2}$ is the smallest such constant.
 
  Take $a, b \in D$ to be any two different vertices, different from $v$. Take $c$ to be one of two vertices in the middle of the subdivided edge between $a$ and $b$ (in case when $r$ is even we choose the one that is nearer to $a$). We consider paths of length $r+1+\ceil{r/2}$ connecting $v$ and $c$ --- at least one of such paths must has nonempty intersection with $Q$. When $r$ is even, then there is exactly one such path, if $r$ is odd --- exactly two. Anyway, the union of all such paths is contained in the set $W_{a, b}$ consisting of subdivided edges $(v, a), (v, b), (a, b)$.

  By $v \not\in Q$ we have that $W_{a,b} - \{v\}$ have nonempty intersection with $Q$.
 
  The sets $W_{a,b} - \{v\}$ and $W_{c,d} - \{v\}$ are disjoint whenever the sets $\{a, b\}$ and $\{c, d\}$ are disjoint. By ranging over all possible indices (from $D - \{v\}$) we can construct $\floor{(n-1)/2}$ such disjoint sets. All such sets contain element from $Q$, which contradicts that $|Q|<\floor{(n-1)/2}$.
\end{proof} 

\section{Algorithmic aspects} \label{chapterAlgo}
  In this section we apply the obtained results to construct two algorithms for parametrized $r$-independent set in nowhere dense graph classes. The first approach uses direct cowitness search, as established in Theorem \ref{CowitnessComp}. The second one originates in stability theory and depends only on existence of small witnesses (Theorem \ref{WitnessExistence}).

\subsection{Algorithm for \texorpdfstring{$r$}{}-independent set by direct cowitness search}
\label{kernelization}
  We want to prove the following Theorems.


  

\begin{theorem} \label{WitnessChecking}
  The problem of checking, given a graph $G$ and subset $Q \subseteq G$, whether $Q$ is a witness for $G$ can be done in time $O(|G||Q|+2^{k|Q|(1+o(1))})$ on general graphs and (for every $\epsilon)$ in time $O(|G||Q|+\nc_{\Cl}(\epsilon, r)^k |Q|^{k(1+o(1))}$ on a nowhere dense graph class $\Cl$.
\end{theorem}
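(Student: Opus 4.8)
The plan is to reduce the witness test to a search for a large independent set in an auxiliary graph whose relevant size is governed by the number of distance profiles on $Q$, and then to enumerate. By definition, $Q$ is a $(k,r)$-witness exactly when every $X\subseteq V(G)$ with $|X|=k$ is captured by $Q$; equivalently, the \emph{capture graph} $H_Q$ on vertex set $V(G)$ --- in which $u$ and $v$ are adjacent iff $Q$ captures $\{u,v\}$ --- has no independent set of size $k$. By Proposition~\ref{ProfileLemma}, whether $Q$ captures $\{u,v\}$ depends only on $\profile_Q(u)$ and $\profile_Q(v)$, namely it holds iff there is $q\in Q$ with $\profile_Q(u)(q)+\profile_Q(v)(q)\le r$. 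Hence $H_Q$ is a blow-up of a graph $\Gamma$ on the set $P$ of profiles realized in $G$, where $\Gamma$ may carry self-loops: a profile $p$ has a self-loop iff $\min_{q\in Q}p(q)\le\lfloor r/2\rfloor$. It follows that $Q$ fails to be a witness iff the loopless part $\Gamma_0$ of $\Gamma$ has an independent set $I$ with $\sum_{p\in I}c(p)\ge k$, where $c(p):=1$ if $p$ has a self-loop and $c(p):=|\{v\in V(G):\profile_Q(v)=p\}|$ otherwise; since every realized profile has $c(p)\ge 1$, it suffices to look for such an $I$ with $|I|\le k$.

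\textbf{Algorithm and running time.} First, run a breadth-first search truncated at depth $r$ from each $q\in Q$; this computes $\profile_Q(v)$ for all $v\in V(G)$ together with the weights $c(p)$ in time $O(|Q|\cdot|G|)$. Then build $\Gamma_0$ and the self-loop flags by comparing the (at most $|P|$) realized profiles pairwise, in time $O(|P|^2|Q|)$. Finally enumerate all at most $O(|P|^{k})$ subsets $I\subseteq P$ with $|I|\le k$, and for each test in time $\operatorname{poly}(k)$ whether $I$ is independent in $\Gamma_0$ and whether $\sum_{p\in I}c(p)\ge k$; report that $Q$ is a witness iff no subset passes. Correctness is precisely the equivalence above, and the total running time is $O(|G||Q|)+|P|^{k}\cdot\operatorname{poly}(k,|Q|)$.

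\textbf{Bounding the number of profiles.} On general graphs a profile is an arbitrary function $Q\to\{0,\dots,r+1\}$, so $|P|\le(r+2)^{|Q|}$ and the enumeration term is $(r+2)^{k|Q|}=2^{k|Q|(1+o(1))}$. On a nowhere dense class $\Cl$, Theorem~\ref{neigComp} gives, for every $\epsilon>0$, that $|P|\le\nc_{\Cl}(r,\epsilon)\,|Q|^{1+\epsilon}$; substituting yields the bound $\nc_{\Cl}(\epsilon,r)^{k}\,|Q|^{k(1+o(1))}$, as stated.

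\textbf{Main obstacle.} The one point requiring care is the bookkeeping of multiplicities: one cannot simply test for an ordinary independent set of size $k$ in $\Gamma_0$, because up to $k$ of the chosen vertices may share a single profile, which is admissible precisely when that profile has no self-loop. Getting the weights $c(p)$ right --- capped at $1$ exactly for self-looped profiles --- and, relatedly, handling the degenerate case where a vertex of $Q$ itself lies in the candidate set (absorbed by Proposition~\ref{ProfileLemma} since $\profile_Q(q)(q)=0$), is where the argument must be precise; everything else is routine.
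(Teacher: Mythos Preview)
Your proof is correct and follows essentially the same approach as the paper: compute all profiles on $Q$ by BFS from each $q\in Q$, observe that whether $Q$ captures a $k$-set depends only on the (multi)set of profiles involved, and then enumerate over the at most $|P|^{k}$ relevant profile combinations, bounding $|P|$ by $(r+2)^{|Q|}$ in general and by $\nc_{\Cl}(r,\epsilon)|Q|^{1+\epsilon}$ in the nowhere-dense case. The paper phrases the enumeration directly over size-$k$ multisubsets of realized profiles, while you wrap the same computation in the language of a capture graph $H_Q$ that is a blow-up of a profile graph $\Gamma$, and then enumerate weighted independent sets of size $\le k$ in $\Gamma_0$; the two formulations are equivalent, and your careful treatment of self-looped profiles (capping $c(p)$ at $1$) corresponds exactly to the paper's use of multisubsets rather than subsets.
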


\begin{theorem} \label{AlgoInd}
  On nowhere dense classes $\Cl$, the problem of finding an $r$-independent set of size $k$ (or conclude that it does not exist) can be solved in time $O( |G|(rk)^{d(1+o(1))}+\nc_\Cl(\epsilon, r)^k (kr)^{dk(1+\epsilon+o(1))})$ for some constant $d$ which depends only on the class $\Cl$.
\end{theorem}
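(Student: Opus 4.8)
The plan is to assemble the three algorithmic ingredients established above: fast construction of a small cowitness (Corollary~\ref{CowitnessComp}), the witness test (Theorem~\ref{WitnessChecking}), and the extraction of an $r$-independent set from a non-captured $k$-tuple (Theorem~\ref{CowitnessIsWitnessComp}).

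First I would invoke Corollary~\ref{CowitnessComp} on $G$ with parameters $k-1$ and $r$ to compute a $(k-1,r)$-cowitness $Q$ for the pair $(V(G),G)$ with $|Q|\leqslant (rk)^{d(1+o(1))}$, in time $O(|G||Q|)$. Then I would run the witness-checking procedure of Theorem~\ref{WitnessChecking} to decide whether $Q$ is a $(k,r)$-witness for $G$. Two cases arise. If the answer is ``yes'', then by definition of a witness $G$ has no $r$-independent set of size $k$, and we output this. If the answer is ``no'', the checking procedure hands us a set $X\subseteq V(G)$ with $|X|=k$ that is not captured by $Q$; since $Q$ is a $(k-1,r)$-cowitness for $(V(G),G)$ but is not a $(k,r)$-witness, the contrapositive of Theorem~\ref{CowitnessIsWitness} tells us that $G$ \emph{does} contain an $r$-independent set of size $k$. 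Feeding $G$, the parameters, $Q$ and $X$ into Theorem~\ref{CowitnessIsWitnessComp} then produces such a set in time $O(k^2|G|)$. Adding the three running times and substituting $|Q|\leqslant (rk)^{d(1+o(1))}$ into the bound of Theorem~\ref{WitnessChecking} gives the $O\big(|G|(rk)^{d(1+o(1))}+\nc_\Cl(\epsilon,r)^k(kr)^{dk(1+\epsilon+o(1))}\big)$ estimate after absorbing lower-order terms.

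The routine part is the exponent arithmetic: one checks that the polynomial-in-$|G|$ contributions collapse to $|G|(rk)^{d(1+o(1))}$, and that raising $|Q|$ to the power $k(1+\epsilon)$ — which is what enumerating $k$-tuples of profiles on $Q$ costs, by Theorem~\ref{neigComp} — produces exactly the exponent $dk(1+\epsilon+o(1))$ of $(kr)$. The step I expect to need the most care is the interface between the theorems: I must make sure that the witness test of Theorem~\ref{WitnessChecking}, when it reports failure, returns the uncaptured $k$-set $X$ in precisely the form required as input by Theorem~\ref{CowitnessIsWitnessComp}, and that the parameter shift (cowitness built with $k-1$, but a target independent set of size $k$) matches the hypotheses of Theorem~\ref{CowitnessIsWitness}. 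No new combinatorial idea is needed beyond what is already proved; the theorem is essentially a bookkeeping assembly of the pieces.
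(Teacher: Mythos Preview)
Your proposal is correct and is essentially the paper's own argument: compute a small cowitness via Corollary~\ref{CowitnessComp}, test it as a witness via Theorem~\ref{WitnessChecking}, and in the negative case use Theorem~\ref{CowitnessIsWitnessComp} to extract an $r$-independent set; then collect the running times. You are, if anything, more careful than the paper about the $k-1$ versus $k$ parameter shift, and your worry about the interface is minor: the proof of Theorem~\ref{WitnessChecking} enumerates candidate profile-multisets explicitly, so when it reports ``not a witness'' it can return representative vertices realizing an uncaptured $k$-set at no extra asymptotic cost.
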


\begin{proof}[Proof of Theorem \ref{WitnessChecking}]
  We check if every $X \subseteq G$ with $|X|=k$ satisfies that there are distinct vertices $x_1, x_2 \in X$ and $q \in Q$ such that $\dist(x_1, q) + \dist(q, x_2) \leqslant r$.

  Observe that whether given $X$ satisfies this property depends only on the multiset of profiles $\{\{ \profile^Q_r(x) | x \in X\}\}$. So instead of considering all $X \subseteq G$ with $|X|=k$ it is sufficient to consider all multisubsets of size $k$ of $B := \{\{ \profile_r^Q(v) | v \in G\}\}$ --- the multiset of all realized profiles to $Q$ in $G$.

  The set $B$ can be computed in time $|Q||G|$ by running $BFS$ from every vertex from $Q$.

  Despite $|B|=|G|$, the number of different elements in $B$ is bounded by $(r+2)^{|Q|}$ --- the number of profiles to the set $Q$. In case when $G$ belongs to some nowhere dense class of graphs, for every $\epsilon >0$ we have the better bound by $\nc_\Cl(\epsilon,r)|Q|^{1+\epsilon}$ (Theorem \ref{neigComp}).

  It follows that the number of different submultisets of $B$ of size $k$ is bounded by $r^{|Q|k(1+o(1))}$ in case of general graphs, and by $\nc_\Cl(\epsilon,r)^k|Q|^{k(1+\epsilon)}$ in case of graphs belonging to a nowhere dense class $\Cl$.

 Checking each such submultiset is done in time $O(k|Q|)$ (for each $q \in Q$ we search for two nearest vertices from $X$).
\end{proof}

\begin{proof}[Proof of Theorem \ref{AlgoInd}]
  We can compute a cowitness $Q$ of size $(kr)^{d(1+o(1))}$ in time $|Q||G|$. Then, by Theorem \ref{CowitnessIsWitness}, the problem of deciding whether there is an $r$-independent set of size $k$ is equivalent to deciding whether $Q$ is a witness. By Theorem \ref{CowitnessIsWitnessComp}, this reduction also holds for the computational version of the independent set problem.
  
  By applying the algorithm from Theorem \ref{WitnessChecking} we are done.
\end{proof}

\subsection{Ladder algorithm}
  In this section we describe an alternative algorithm to find an $r$-independent set. It works correctly on any graph. When the input graphs are restricted to some class of graphs which satisfies additional properties (that are more general then the assumption of nowhere denseness), the presented algorithm works in linear \textit{FPT} time, otherwise there is no specific time bound.

  We will work with formulas of first order logic. For tuples $\overline{x}=(x^1, \ldots, x^n)$ and $\overline{y} = (y^1, \ldots, y^m)$, fix a formula $\phi(\overline{x}, \overline{y})$ of first order logic with free variables $x^1, \ldots, x^n$ and $y^1, \ldots, y^m$.
  
\begin{definition}
 A \emph{ladder} for formula $\phi(\overline{x}, \overline{y})$ of length $d$ is a family of tuples $\overline{x}_i, \overline{y}_j$ (of length respectively $n$ and $m$), defined for odd $i \in \{1, \ldots, d\}$ and even $j \in \{2, \ldots, d\}$, such that $\phi(\overline{x}_i, \overline{y}_j)$ holds if and only if $i<j$.
\end{definition}
\begin{definition}
 A class of graphs $\mathcal{C}$ is called \emph{stable} if and only if for any formula $\phi(\overline{x}, \overline{y})$ there is a constant $c$ such that there is no ladder for $\phi$ of length $c$ in any $G \in \mathcal{C}$.
\end{definition}
  Adler and Adler proved in \cite{DBLP:journals/ejc/AdlerA14} that:
\begin{theorem}\label{StabT}
  Fix a class $\Cl$ of graphs closed under taking subgraphs. Then $\Cl$ is stable if and only if $\Cl$ is nowhere dense.
\end{theorem}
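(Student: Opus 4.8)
The plan is to prove the two implications separately, the forward one (nowhere dense $\so$ stable) being by far the substantial one; the converse is elementary.

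\emph{Somewhere dense $\so$ not stable.} Suppose $\Cl$ is subgraph-closed and somewhere dense. By the characterization of somewhere denseness via subdivisions recalled above, there is $r$ such that $\Cl$ contains the $r$-subdivision of every simple graph. For $d \in \Nat$ let $H_d$ be the half-graph with parts $\{a_1, \ldots, a_d\}$ and $\{b_1, \ldots, b_d\}$ and $a_i \sim b_j \iff i \leqslant j$, and let $\widetilde{H}_d$ be its $r$-subdivision, which lies in $\Cl$. The only vertices of $\widetilde{H}_d$ of degree $\geqslant 3$ are the original ones, and every path between two of them traverses entire subdivided edges; hence for the original vertices $\dist(a_i, b_j) = r+1$ if $i \leqslant j$ and $\dist(a_i, b_j) \geqslant 3(r+1)$ otherwise. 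Let $\phi(x, y)$ be the first-order formula expressing $\dist(x,y) \leqslant r+1$ (a disjunction over path lengths at most $r+1$). Taking $\overline{x}_i := a_i$ for odd $i$ and $\overline{y}_j := b_{j-1}$ for even $j$ gives $\phi(\overline{x}_i, \overline{y}_j) \iff i \leqslant j-1 \iff i < j$, i.e.\ a ladder for $\phi$ of length $d$ in $\widetilde{H}_d \in \Cl$. As $d$ was arbitrary, $\phi$ witnesses that $\Cl$ is not stable.

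\emph{Nowhere dense $\so$ stable.} Fix an FO formula $\phi(\overline{x}, \overline{y})$; we must bound its ladder length over $\Cl$. First I would reduce to local formulas: negation preserves ladder length, and over a subgraph-closed class the ladder length of a Boolean combination is bounded by a Ramsey-type function of the ladder lengths of its constituents (the finitary form of ``stable formulas form a Boolean algebra''). By Gaifman's locality theorem $\phi$ is, over all graphs, such a Boolean combination of basic local sentences and of $t$-local formulas $\psi(\overline{x}, \overline{y})$ for some $t = t(\phi)$, where a $t$-local formula depends only on the subgraph induced by the $t$-neighbourhood of $\overline{x}\,\overline{y}$. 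A basic local sentence does not depend on $\overline{x}, \overline{y}$ at all, so its ladder length is at most a small constant; it therefore suffices to bound ladder lengths of $t$-local $\psi$. Second, I would split off the ``far'' instances: if every coordinate of $\overline{x}$ is at distance $> 2t$ in $G$ from every coordinate of $\overline{y}$, the $t$-ball around $\overline{x}\,\overline{y}$ is the disjoint union of the $t$-balls around $\overline{x}$ and around $\overline{y}$, so on such instances $\psi$ agrees with a Boolean combination of a formula in $\overline{x}$ only and a formula in $\overline{y}$ only; a formula using only one side of a ladder has ladder length at most $3$. After these reductions it remains to bound the length of ladders for a $t$-local $\psi$ in which, for every pair $(\overline{x}_i, \overline{y}_j)$ with $i < j$, some coordinate of $\overline{x}_i$ is within distance $2t$ in $G$ of some coordinate of $\overline{y}_j$ --- call these \emph{close} ladders.

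\emph{Bounding close ladders --- the core.} Here I would use that a subgraph-closed nowhere dense class is uniformly quasi-wide (Ne\v{s}et\v{r}il and Ossona de Mendez; this can also be extracted from the splitter-game characterization, Theorem~\ref{SplitterGame}): for every $\rho$ there is $s = s(\rho, \Cl)$ and, for every $m$, an $N = N(\rho, m, \Cl)$, such that every vertex set of size $\geqslant N$ in any $G \in \Cl$ contains a $\rho$-scattered subset of size $m$ in $G - S$ for some $S$ with $|S| \leqslant s$. Given a very long close ladder, apply this with $\rho$ slightly larger than $4t$ to one coordinate of the $\overline{x}$-elements, obtaining $S$ and a large subsequence $\overline{x}_{i_1}, \ldots, \overline{x}_{i_m}$ scattered in $G - S$; pick a ladder element $\overline{y}_{j_0}$ with $j_0 > i_m$, which is within $2t$ in $G$ of each $\overline{x}_{i_a}$. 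A shortest such path in $G$ either avoids $S$ --- possible for at most one index $a$, else two of the $\overline{x}_{i_a}$ would be at distance $\leqslant 4t$ in $G - S$ --- or meets $S$, in which case $\overline{x}_{i_a}$ lies within distance $2t$ of $S$ in $G$. Hence all but one of the scattered elements lies in $\bigcup_{v \in S} N_{2t}^{G}(v)$, and by pigeonhole a large fraction of them lies in a single such ball, with the connecting paths to $\overline{y}_{j_0}$ running through $S$. Peeling off $S$ thus localizes the ladder into bounded-radius pieces of $G - S$, each of which is one step simpler in the splitter game; one then inducts on the splitter rank, concluding that any close ladder has length bounded in terms of $t$, $\Cl$, and that rank. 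The hard part will be exactly this last, iterative step: making precise how the splitter rank controls the collapse, and verifying that a close ladder in $G$ restricts to a usable ladder in each of the bounded-radius pieces of $G - S$. This is the technical heart of the Adler--Adler argument \cite{DBLP:journals/ejc/AdlerA14}, which in turn generalises Podewski and Ziegler's theorem that superflat graphs are stable.
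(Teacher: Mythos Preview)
The paper does not give its own proof of this theorem; it is quoted as a result of Adler and Adler \cite{DBLP:journals/ejc/AdlerA14} and used as a black box (only to verify that assumption~(2) of Theorem~\ref{ladder} holds on nowhere dense classes). So there is nothing in the paper to compare your proposal against.

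What you have written is a sketch of the Adler--Adler argument itself. The backward direction (somewhere dense $\Rightarrow$ unstable via half-graphs in subdivisions) is correct, modulo a harmless imprecision: not all original vertices of $\widetilde{H}_d$ have degree $\geqslant 3$, but that is irrelevant since the point is only that subdivision vertices have degree $2$, so paths between original vertices traverse whole subdivided edges. For the forward direction your outline is the right shape (Gaifman locality, Boolean closure of stable formulas, splitting off far pairs, then uniform quasi-wideness on close ladders), but the ``core'' step as written has a small gap: when you apply quasi-wideness to \emph{one} coordinate of the $\overline{x}_i$ and then use closeness to $\overline{y}_{j_0}$, the coordinate of $\overline{x}_{i_a}$ that is within $2t$ of $\overline{y}_{j_0}$ need not be the coordinate you scattered. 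One fixes this by first pigeonholing over the pair of coordinates witnessing closeness, and only then applying quasi-wideness to the selected coordinate. You correctly flag that the inductive localization (restricting the ladder to bounded-radius pieces of $G-S$) is where the real work lies, and you defer it to the cited paper; that is honest, but it means your proposal is a proof outline rather than a proof.
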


\begin{definition}
 Let $\set(\overline{x})$ be the set of elements appearing in the tuple $\overline{x}$.

  Define $\psi_{r,k;n,m}(\overline{x}, \overline{y})$ to be the formula describing the property that $\set(\overline{x})$ is not captured by $\set(\overline{y})$. This property can be easily expressed in first order logic.
\end{definition}
\begin{remark}
  The formula $\psi_{r,k,n,m}$ is independent of the order of elements inside the tuple. Therefore, in the ladder for $\psi$ we can use sets instead of tuples if we additionally require that the sets $X_i, Y_j$ have sizes at most $n$ and $m$, respectively.
  
  By generalized ladder we will mean ladder of sets not necessarily satisfying that sets $X_i, Y_j$ have sizes bounded by $n$ and $m$. 
\end{remark}

  The goal of this section is to prove the following:
\begin{theorem} \label{ladder}
 There is algorithm $A$ that given a graph $G$, and $k, r \in \Nat$ produces either:
 \begin{enumerate}[label=(\alph*)]
 \item an $r$-independent set of size $k$, or
 \item a witness that $G$ has no $r$-independent set of size $k$
 \end{enumerate}
  On any class of graphs $\Cl$ satisfying both of the following assumptions, the algorithm $A$ works in time $f(k,r)|G|$ and generates witnesses of size bounded by a function of $r$ and $k$.
\begin{enumerate}
\item There is a constant $c$ such that for any graph $G \in \Cl$ if $G$ does not have an $r$-independent set of size $k$, then $G$ has a witness of this of size at most $c$
\item There is a constant $d$ such that there is no ladder for $\psi_{r,k;k,c}$ of length $d$ in any graph from $\Cl$. 
\end{enumerate}
\end{theorem}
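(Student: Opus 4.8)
The plan is to produce the required set by an iterative refinement whose length is controlled by Assumption~(2), each round costing $O_{k,r}(|G|)$ time. The algorithm will maintain a list $X_1,\dots,X_t$ of $k$-element sets already discovered \emph{not} to be $r$-independent, together with a candidate witness $W$ taken to be a \emph{minimum}-size set that, for every recorded $X_i$, meets the set $P_i$ of all vertices lying on some short path between two vertices of $X_i$ --- equivalently, a minimum-size $W$ capturing every $X_i$; initially the list is empty and $W=\emptyset$. In a round it first tests whether some $X$ with $|X|=k$ is not captured by $W$; by the same observation as in the proof of Theorem~\ref{WitnessChecking} this depends only on the multiset of profiles of the vertices of $X$ towards $W$, so it suffices to run a BFS from each vertex of $W$ (keeping one representative per realized profile) and to inspect every $k$-element multisubset of the profiles realized towards $W$. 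If every such multiset is captured then $W$ is a witness and the algorithm stops with outcome~(b). Otherwise it recovers a concrete $X$ not captured by $W$, tests by BFS from the vertices of $X$ whether $X$ is $r$-independent, and if so stops with outcome~(a), returning $X$. If not, $X$ is appended to the list, $W$ is recomputed as a minimum set meeting $P_1,\dots,P_t,P_X$ (by going over the boundedly many ways of grouping these sets), and the loop repeats.

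To bound the number of rounds on a class $\Cl$ satisfying the assumptions, consider first a no-instance $G\in\Cl$. By Assumption~(1) there is a witness $Q^{*}$ with $|Q^{*}|\le c$; since a witness captures every $k$-set, $Q^{*}$ meets every $P_i$, so the minimum set recomputed in each round has size at most $c$. Writing $W^{(s)}$ for the candidate at the end of round $s$ (with $W^{(0)}=\emptyset$), by construction $X_i$ is not captured by $\bigcup_{s<i}W^{(s)}$, while $W^{(s)}$ captures $X_i$ as soon as $s\ge i$; hence $W^{(s)}$ captures $X_i$ if and only if $s\ge i$. Interleaving the two lists in decreasing order then exhibits, after $T$ rounds, a ladder for $\psi_{r,k;k,c}$ of length $\Omega(T)$, which by Assumption~(2) forces $T=O(d)$. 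As outcome~(a) cannot occur in a no-instance, the algorithm stops with outcome~(b) within $O(d)$ rounds and returns a witness of size $O(cd)$.

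For the running time, in a no-instance we have $|W|=O(cd)$ throughout, so on a class of bounded neighbourhood complexity (Theorem~\ref{neigComp}) the number of profiles towards $W$, hence of $k$-element multisubsets of them, is bounded in terms of $k,r$; all the per-round work --- the BFS's, the capture tests, the computation of the $P_i$, and the minimum-set recomputation over boundedly many sets --- is then $O_{k,r}(|G|)$, and the whole run is $O_{k,r}(|G|)$. For a yes-instance the algorithm still terminates, since $W$ gains a vertex each round and once $W=V(G)$ the only uncaptured $k$-sets are the $r$-independent ones, forcing outcome~(a); on a good class one again bounds the number of rounds through the ladder above, using that a minimum hitting set of size exceeding $c$ would, by the contrapositive of Assumption~(1), already certify that $G$ is a yes-instance.

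The step I expect to be the genuine obstacle is exactly this size control: arranging the refinement so that the maintained candidate never exceeds the bound $c$ of Assumption~(1) --- so that what the iteration builds is a ladder for $\psi_{r,k;k,c}$ itself, and not for some variant with wider $y$-tuples --- while keeping successive candidates coherent enough with the discovered counterexamples for the defining biconditional of a ladder to hold. The rest is the bookkeeping above, standard BFS, and the profile-counting already used in Section~\ref{kernelization}.
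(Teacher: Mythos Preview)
Your plan is essentially the paper's ladder algorithm, and you even flag the real obstacle at the end; but the concrete claim you make in the middle --- ``by construction $X_i$ is not captured by $\bigcup_{s<i}W^{(s)}$'' --- is simply false as stated, and this is exactly where the argument breaks. You select $X_i$ as a $k$-set not captured by the \emph{current} candidate $W^{(i-1)}$; since you then recompute $W$ from scratch as a \emph{minimum} hitting set, the successive $W^{(s)}$ are not nested, and nothing prevents some earlier $W^{(s)}$ (with $s<i-1$) from capturing $X_i$. Thus the biconditional ``$W^{(s)}$ captures $X_i$ iff $s\ge i$'' only holds in one direction, and the interleaved sequence is not a ladder for $\psi_{r,k;k,c}$. (A Ramsey cleanup does not save this: the residual pattern can be the diagonal $\psi(X_i,W^{(s)})\Leftrightarrow s=i-1$, which carries no order.)

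The paper's fix is to decouple the two roles you are trying to make a single $W$ play. It keeps the small sets $Y_2,Y_4,\ldots$ separately (each a \emph{smallest} set capturing all recorded $X_i$'s, found not by a hitting-set computation on your $P_i$'s but by enumerating subsets of profiles towards $\bigcup_i X_i$, which is the part that runs in $f(k,r)\cdot|G|$ time), while in the $X$-step it searches for the next $X$ as a $k$-set not captured by the \emph{union} $A=\bigcup_j Y_j$ via Theorem~\ref{WitnessChecking}. Then $X_i$ evades every individual $Y_j$ with $j<i$ (because it evades their union), and each $Y_j$ captures every earlier $X_i$ by construction; the alternating sequence $X_1,Y_2,X_3,\ldots$ is now a genuine ladder for $\psi_{r,k;k,c}$, and Assumption~(2) bounds its length. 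This decoupling is precisely the ``coherence'' you were worried about, and it is the one idea your plan is missing.
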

\begin{remark} 
  Assumptions $1$ and $2$ are satisfied by every nowhere dense graph class, by Theorems \ref{Cowitness} and \ref{StabT}. Note that we do not require any computability of a witness in our class $\Cl$, only its existence.
\end{remark}
\begin{proof} [Proof of theorem \ref{ladder}]

  Our algorithm tries to construct a generalized ladder for $\psi_{r,k,k,c}$ step by step. When it fails to enlarge the ladder then we get either $(a)$ or $(b)$. The ladder in the construction will have the additional property that $|X_i| = k$ and, in presence of $(2)$, $|Y_i| \leqslant c$.
  
  Assume we have already constructed a ladder of length $d$. We have two steps:
 
\paragraph{$Y$-step: witness search.}
  Assume that $d$ is even. We want to find $Y_{d+1}$.
  
  If any $X_i$ constructed so far is $r$-independent, we can terminate concluding $(a)$. Assume otherwise. We aim at finding a set that witnesses that $X_i$ is not $r$-independent for all odd $i \in \{1, \ldots, d\}$. Assumption $(2)$, if present, guarantees the existence of such witness of size at most $C$.
  
  To find the witness we consider profiles. Whether $Q$ is a witness for all $X_i$ depends only on the set $\{ \profile_W(x) | x \in Q \}$, where $W = \bigcup_i X_i$. The size of $W$ is bounded by $kd$ --- so the number of profiles and number of sets of profiles are also bounded.
  
  We proceed as follows:
\begin{enumerate}
\item  We compute $\profile_W$ for all vertices in $G$. This can be done by starting \texttt{BFS} from each vertex in $W$ --- this procedure takes time $O(|W||G|)$. Denote the set of all found profiles by $P$.
\item  We iterate over all subsets of $P$ and check whether the corresponding set witnesses all $X_i$. As $Y_{i+1}$ we take the smallest witness.
\end{enumerate}
  The size of set $P$ is bounded by $r^{kd}$, thus the second step takes time $O(kd2^{|P|})=O(2^{(r^{kd})}kd)$. By choosing the smallest possible witness, we make the algorithm uniform: when $(2)$ holds, we do not need to know $c$ in advance, as the found witness will be of size at most $c$.

\paragraph{$X$-step: finding candidate for an $r$-independent set.}
  Assume that $d$ is odd. We want to find a set $X_{i+1}$ such that none of $Y_i$'s witnesses that $X_{i+1}$ is not an $r$-independent set.
  
  We check if $A_i=\bigcup_{j} Y_j$ is a witness --- by Theorem \ref{WitnessChecking} this can be done in time $f(|A_i|)|G|$ and in constructive way (if $A_i$ is not a witness, then a set of size $k$ that is not witnessed is generated). Now we can conclude:
\begin{itemize}
\item If $A_i$ is a witness then we can conclude $(b)$ --- note that the size of $A_i$ is bounded by $dC$.
\item If $A_i$ is not a witness, take $X_{i+1}$ to be any set not witnessed by $A_i$  --- it will not be witnessed by any $Y_i$ and meets the stated requirement.
\end{itemize}
 
\paragraph{Halting property.}
  On classes of graphs satisfying (1) and (2), we make at most $d$ rounds by assumption, because at every step of the algorithm, sets $X_1, Y_2, X_3, \ldots, $ form a ladder. The algorithm is correct by construction and we already argued that we produce a witness of size at most $dc$.
  
  To prove the halting property on general graphs we need to bound the length of the ladder. Observe that $A_l$ is a strictly increasing family of sets (as a function of $l$). After at most $n$ steps, $A_l = G$ ($l$ is the number of steps up to that point). We prove that we terminate in at most two steps.
  
  In the next step, the algorithm will be checking whether there exists set not captured by $A_l=V(G)$ --- this is equivalent to checking whether $G$ has an $r$-independent set. If such a set is not found then the witness $V(G)$ is returned. If such a set $B$ is found, then the ladder is enlarged by one step and the $r$-independent set $B$ is returned in the subsequent step.
\end{proof}

\section{Further work}
  For the ladder algorithm we use the bound of ladder length obtained by general argument of stability. For the particular $\psi$ that we used, maybe better bounds can be obtained in restricted settings, for example on planar graphs or graphs with bounded treewidth.
  
  Practical evaluation of this algorithm is of independent interest.  Various heuristic improvements seem reasonable --- including playing splitter game more efficiently, using partial cowitnesses found in another branch of the cowitness finding algorithm, identifying and pruning unimportant vertices found during the ladder construction.

\bibliographystyle{plain}
\addcontentsline{toc}{chapter}{Bibliography}

\bibliography{GroheKS14,GareyJS76,TomitaSHW13,AkibaI16,BourgeoisEPR12,XiaoN13,DrangeDFKLPPRVS16,AdlerA14,EickmeyerGKKPRS17,NesetrilM10,0030491}


\end{document}